\newcommand{\vect}[1]{\boldsymbol{#1}}
\DeclareMathOperator*{\argminG}{arg\,min}
\newtheorem{theorem}{Theorem}
\newtheorem{conjecture}{Conjecture}
\newcommand\myshade{85}
\colorlet{mylinkcolor}{violet}
\colorlet{mycitecolor}{YellowOrange}
\colorlet{myurlcolor}{Aquamarine}
\theoremstyle{plain}
\newtheorem{definition}{Definition}
\begin{document}
\title{A quantum generative model for multi-dimensional time series using Hamiltonian learning}

\author{Haim Horowitz}%
\author{Pooja Rao}
\author{\href{http://www.santoshkumarradha.me}{Santosh Kumar Radha}}
\email{santosh@agnostiq.ai}

\affiliation{%
 Agnostiq Inc. 325 Front St W, Toronto, ON M5V 2Y1\\
}%

\date{\today}

\begin{abstract}
Synthetic data generation has proven to be a promising solution for addressing data availability issues in various domains. Even more challenging is the generation of synthetic time series data, where one has to preserve temporal dynamics, \textit{i.e.}, the generated time series must respect the original relationships between variables across time. Recently proposed techniques such as generative adversarial networks (GANs) and quantum-GANs lack the ability to attend to the time series specific temporal correlations adequately. We propose using the inherent nature of quantum computers to simulate quantum dynamics as a technique to encode such features. We start by assuming that a given time series can be generated by a quantum process, after which we proceed to learn that quantum process using quantum machine learning. We then use the learned model to generate out-of-sample time series and show that it captures unique and complex features of the learned time series. We also study the class of time series that can be modeled using this technique. Finally, we experimentally demonstrate the proposed algorithm on an $11$-qubit trapped-ion quantum machine.
\end{abstract}

\maketitle

\paragraph{Introduction}: Extracting dynamical models and their qualitative characteristics from given time series data appears in various settings, including (but certainly not limited to) tracking~\cite{arulampalam2002tutorial}, medical imaging~\cite{penny2011statistical}, financial~\cite{zhang2004prediction}, and video analysis~\cite{dugad1999video}. Broadly, such methods can be classified into a spectrum of methods bounded by two extreme settings - deterministic dynamical models and generalized learning models. Deterministic dynamical models involve analytically modeling the process based on a specific set of assumptions. The model parameters are fit to agree with the data, which can then be used as a black box to analyze/predict the characteristics of the underlying process. In contrast, generalized learning models rely on searching and exploring for a solution in the hypothesis space. In general, a searchable hypothesis space can be pretty significant, even for reasonably simple data, and thus, one often restricts the search space by various techniques including, but not limited to, restricting the model architecture, learning procedure, etc. More concretely, we define the following problem - an unknown process emits multidimensional data at various time steps; with access limited to the emitted data only, can one recreate the process as a black box? In fact, many data sets, from language sentences to stock prices, assume that form. The simplest assumption to model such a process could be to start with a deterministic dynamical model, where we assume that the process emits data in each dimension sampled from a distribution, parameters of which can be fit for the given data. On the other extreme, one can train a deep recurrent neural network~\cite{NIPS2013_1ff8a7b5} to learn the process, training which would result in learning the black-box process. More often than not, it is unfavorable to over-parametrize a network in the hope of learning unknown features, as this increases the space of searchable models exponentially. An intriguing aspect of these learning methods is to explore architectures of networks that encode enough information about the problem to guide them to the solution model while being general enough to capture the required features.

\begin{figure}
\includegraphics[width=0.9\linewidth]{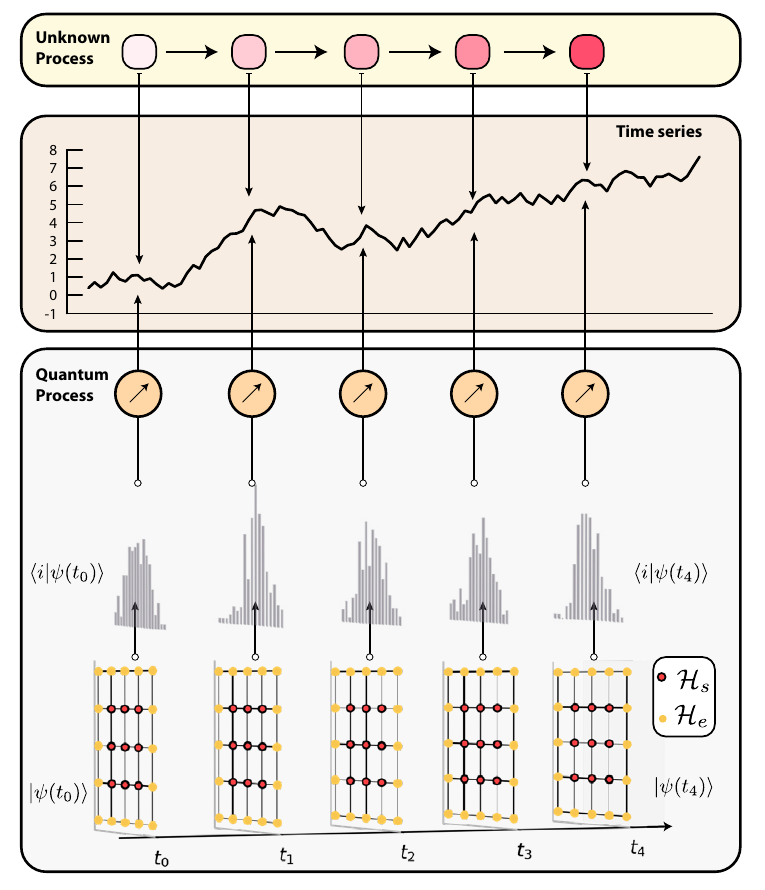}
    \caption{(Top) Unknown process emitting a signal at every time step. (Bottom) Open quantum system (red atoms) interacting with the external environment (yellow atoms) with time-dependent interactions being used as a learnt surrogate for the unknown process.}
    \label{fig:hero}
\end{figure}

\begin{figure*}[htb!]
\includegraphics[width=\linewidth]{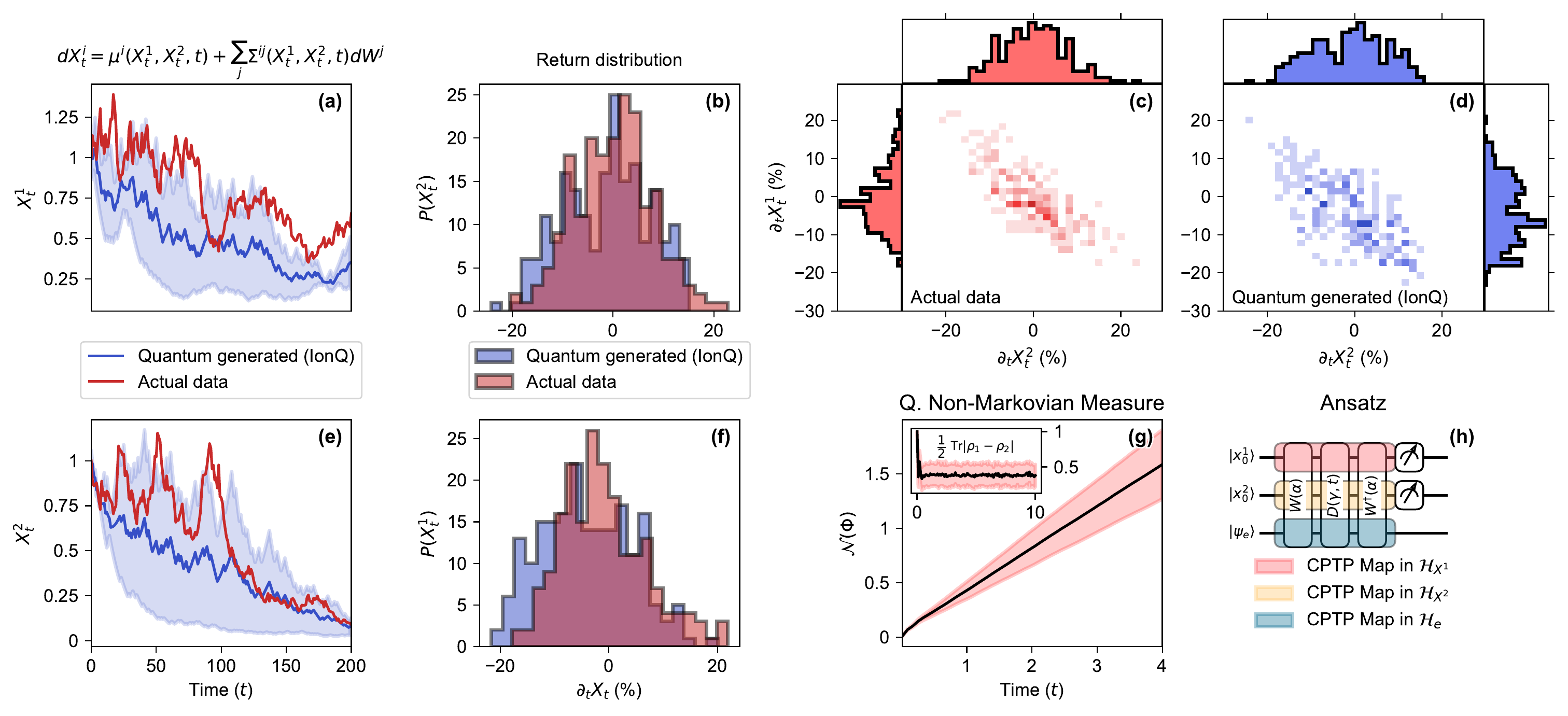}
    \caption{(a)/(e) Input 2-dimensional time series (red) and quantum generated time series' (blue) mean and variance. (b)/(f) First order percentage difference of the quantum generated (blue) and input time series (red). (c)/(d) Correlation between the 2-dimensional time series of the given data and mean of quantum generated series (Note the negative correlation feature being captured by the generated model). (g) Net quantum non-Markovian measure as given in \autoref{eq:nonmarkov} for a single dimensional time series. (h) Ansatz as defined in \autoref{eq:ansatz} along with the various CPTP channels for the two-dimensional time series.}
    \label{fig:mf}
\end{figure*}
Although there exist many networks specifically tuned for a particular model space, recently, physics-motivated architectures~\cite{daw2017, duan2022} have been explored to learn problems that are hard to generalize. Inspired by the these classical learning methods, we propose a physics-inspired quantum machine learning algorithm that can model the unknown process of a multi-dimensional time series. Intuitively, we propose using the dynamics of an open quantum system as a surrogate for the unknown process. We do this by embedding the open system ($\mathcal{H}_s$) in a bigger Hilbert space interacting with an environment($\mathcal{H}_e$). The goal then is to learn the Hamiltonian, which dictates the combined evolution of this entangled model living in $\mathcal{H}_s\otimes \mathcal{H}_e$. This model opens up the opportunity to dissect various information about the process. For instance, assessing changes in $\mathcal{H}_e$ at specific times could potentially give us insights into the process' environment. One could also, in principle, put constraints on the environment Hamiltonian based on the process' assumptions.

\paragraph{Formalism}: To better understand the class of time series being modeled via this technique, we develop the notion of $K$-Coherence. For a natural number $m$, a $m$-dimensional discrete time series is a sequence ${\mathbf{X}}=(\mathbf{s}_1,...,\mathbf{s}_m)$, all of which are of the same length, with possible states being $\{\ket{0},\ket{1}\ldots \ket{2^n-1}\}$. For $l\in \{1,...,m\}$, we define the transition probabilities for the time series $\mathbf{s}_l$, and denote $T_{ij}^l(k)$ as the probability of transitioning from the state $i$ to $j$ in $k$ steps for the time series $\mathbf{s}_l$. Let $K=\{k_0,...,k_{n-1}\}$ be a set of natural numbers. We say that that the high-dimensional time series ${\mathbf X}$ is $K-CPTP$ compatible if for every $k\in K$ and $l\in \{1,...,m\}$, there is a completely positive trace preserving ($CPTP$) map $\mathcal{E}_k^l: S(\mathcal{H}_{s_l}) \rightarrow S(\mathcal{H}_{s_l})$ such that 
\begin{equation}
\Tr\left(\dyad{j} \mathcal{E}_k^l(\dyad{i})\right)=T_{ij}^l(k) \label{eq:kcptp}.
\end{equation}
 As an immediate corollary to Stinespring's dilation theorem~\cite{SM}, we get the following \textemdash ~given a $K$-CPTP compatible time series ${\mathbf X}$, for every $k\in K$ and $l\in \{1,...,m\}$, there is a unitary $U_{k,l}$ witnessing the conclusion of Stinespring's theorem for the CPTP map $\mathcal{E}_k^l$. Therefore, given the $k$-step transition matrices of ${\mathbf X}$, it should be possible to learn the unitaries $U_{k,l}$'s for $k\in K$ and $l\in \{1,...,m\}$. Next, we will consider the case where our learning task is reduced to learning the matrices $U_{k,l}$ when they all arise from powers of the same matrix $U$. We start by defining the notion of $K$-Coherence in the following way.
\begin{definition} \label{k-coherence}
Let $K=\{k_0,...,k_{n-1}\}$ be a set of natural numbers. We say that an $m$-dimensional time series ${\mathbf X}=(\mathbf{s}_1,...,\mathbf{s}_m)$ is $K$-Coherent if:
\begin{enumerate}
 \item ${\mathbf X}$ is $K-CPTP$ compatible. 
 \item There is a finite-dimensional Hilbert space $\mathcal{H}_*$ such that, for every $l\in \{1,...,m\}$, $\mathcal{H}_{s_1} \otimes ... \otimes \mathcal{H}_{s_{l-1}} \otimes \mathcal{H}_{s_{l+1}} \otimes ... \otimes \mathcal{H}_{s_m} \otimes \mathcal{H}_*$ satisfies the conclusion of Stinespring's theorem for $\mathcal{E}_k^l$ for every $k\in K$.
 \item There is a unitary $U$ acting on $\mathcal{H}_{s_1} \otimes ... \otimes \mathcal{H}_{s_m} \otimes \mathcal{H}_*$ such that for every $l\in \{1,...,m\}$ and $k\in K$, the unitary $U_k=U_{k,l}$ from Stinespring's theorem for $\mathcal{E}_k^l$ satisfies $U_k=U^k$.
\end{enumerate}
\end{definition}
Here, $\mathcal{H}_{s_i}$ refers to the Hilbert space of the individual time series we are interested in, while $\mathcal{H}_{*}$ refers to the environment space that the given multidimensional time series is interacting with. We refer the reader to Ref.~\cite{SM} for a detailed discussion of  ${\mathbf X}=(\mathbf{s_1})$ one-dimensional time series. 

With Def.~\ref{k-coherence}, we set the formal requirement for time series that can be modeled via the powers of a single unitary matrix. It can be shown that the set of $K$-Coherent time series is strictly larger than those of time series formed from unistochastic transition matrices~\cite{SM}. While characterizing all time series that are $K$-Coherent is beyond the scope of this letter, we do numerically show below that a time series formed by a general stochastic differential equation comes under this set. We see that for any $K$-Coherent time series, one can always find a single parameter quantum unitary in a higher dimensional Hilbert space, the powers of which can model the temporal correlation of the original time series. As we will see in the subsequent  discussion, this amounts to finding a quantum Hamiltonian in a bigger latent space, the dynamics of which, has a unique map to the given series $\mathbf{X}$. Thus, one can, in theory, store the temporal information of such time series in a quantum channel (or equivalently a quantum Hamiltonian in higher dimensional space). Although the efficiency of such storage is to be evaluated, this gives us a technique to design a black box generative model that can be used to generate unseen time series with similar characteristics of the given series.

\paragraph{Algorithm}: Given ${\mathbf{X}}$, a classical $m$-dimensional $K$-Coherent time series, we start by considering $m$ Hilbert spaces, in which we embed the discrete classical states to their respective Hilbert space, the basis of which is spanned by the set of available classical states~\footnote{This is motivated by the fact that the quantum process modeling the classical process is a universal dynamical map, without which, one could in general map the initial state to a combined System+Environment Entangled state. }. In the previous discussion, we have seen that there exists a unitary $U$, the contents of which uniquely stores the higher-order transition probabilities of the given series, which can be accessed by its higher powers. Thus, in this formalism, we consider the existence of a systems$+$environment entangled single parameter unitary which guides the individual evolution of the time series. Equivalently using Stone's theorem, we are interested in the Hamiltonian $H$, that acts on the larger  $\mathcal{H}_{s_1} \otimes \ldots \otimes \mathcal{H}_*$ space, the time evolution of which results in the probability statistics of the given classical time series. To this end, similar to Ref.~\cite{qcl,C_rstoiu_2020} we parameterize the time dependent Hamiltonian (or the single parameter unitary) via the form 

\begin{equation}
U_{\vect{\alpha}}(t)=e^{-iH(\vect{\alpha})t}:=V_{lh}(\vect{\alpha}, t) := W_l(\theta)D_h(\gamma, t)W^{\dagger}_l(\theta), \label{eq:ansatz}
\end{equation}
where $W_l(\theta)$ is an $l$-layered parameterized quantum circuit, $D_h(\gamma, t)$ is an $h$-local parameterized diagonal unitary evolved for time $t$ and $\vect{\alpha} = \{\gamma, \theta\}$ is the parameter set~\cite{SM}. Trainability of this ansatz for general probabilities was previously shown in Ref.~\cite{qcl} and SM. We now define
\begin{equation*}
a_{ijkd}^{lh}(\vect{\alpha})=\Tr\left(\dyad{j}_{d} \Tr_{\bar{\mathcal{H}}_d}V_{lh}(\vect{\alpha}, k)\dyad{\bar{i}_d} V^\dagger_{lh}(\vect{\alpha}, k)\right).
\end{equation*}

Here, we use the notation $\ket{j}_{d}$ to indicate the $j^{th}$ basis state in Hilbert space $\mathcal{H}_{s_d}$, $\dyad{\bar{i}_d}$ to indicate a state given by $\dyad{0}_{s_1}\ldots\otimes\dyad{i}_{s_d}\otimes\ldots\dyad{0}_*$ and $\Tr_{\bar{\mathcal{H}}_d}$ is to trace all the Hilbert spaces but $\mathcal{H}_{s_d}$, the result of which is a mixed state $\rho_d\in S(\mathcal{H}_{s_d})$. More verbally, we start by initializing state $\ket{i}\in \mathcal{H}_{s_d}$, after which we evolve the state in the bigger combined Hilbert space for time $k$, using the parameterized unitary of the form $V_{lh}(\vect{\alpha}, t)$ with $l$-layered $W$ and $h$-layered $D$ ansatz. We then trace out the effect of the pseudo environment, \textit{i.e}, $\mathcal{H}_*$ + systems other than $\mathcal{H}_{s_d}$, resulting in $\rho_d\in S(\mathcal{H}_{s_d})$. We finally project it on to state $\ket{j}_d\in\mathcal{H}_{s_d}$. Thus $a_{ijkd}^{lh}(\vect{\alpha})$ gives us the probability measure of how likely it is for state $\ket{i}_d\rightarrow\ket{j}_d$ using a $CPTP$ map when evolved for time $k$ using a parameterized circuit $V_{lh}(\vect{\alpha}, t)$.

Given a time series ${\mathbf{X}}$, it is straight forward to calculate the transition matrix for the $d^{th}$ time series for $k^{th}$ time step $T_{ij}^d(k)$. Thus one can define the cost function as 
\begin{equation*} 
C_{lh}(\vect{\alpha})=\sum_{ijdk}\mathcal{D}\left(a_{ijkd}^{lh}(\vect{\alpha}),T_{ij}^d(k)\right),\label{eq:cost}\\
\end{equation*}

and the corresponding objective function as 
\begin{equation*} 
\argminG_{\vect{\alpha}} C_{lh}(\vect{\alpha})\label{eq:loss},
\end{equation*}

where $\mathcal{D}(\cdot,\cdot)$ is an appropriate measure of loss. It is important to note that the vector formed by suppressing the index $j$ for both $a$ and $T$ gives us a probability vector that sums up to one. This probability vector for $a$ can be efficiently calculated in a quantum computer as the final statistical measurement on the qubits in the given $\mathcal{H}_{s_d}$ space. We thus use the Kullback-Leibler divergence as the loss measure in this index and calculate the following as the loss function throughout the letter,
\begin{equation} 
C_{lh}(\vect{\alpha})=\sum_{idk}\left(\sum_j a_{ijkd}^{lh}(\vect{\alpha}) \log \left(\frac{a_{ijkd}^{lh}(\vect{\alpha})}{T_{ij}^d(k)}\right) \right)\label{eq:actualloss}.
\end{equation}

Complexity of this sum is given by the number of states in each Hilbert space times the total time steps. For a $K$ time step learning with each $d$ time series spanning a space of $n_1,n_2\cdots n_d$ qubits, we have $O(\sum_{i=1}^{d}2^{2n_i}K)$ elements going into the summation. Typically, one randomizes the elements in this loss function into smaller mini-batch of size $<<\sum_{i=1}^{d}2^{2n_i}K$.

\paragraph{Experimental Demonstration}: Because of the fairly adaptable nature of resource requirements, the techniques introduced above are directly applicable to current day NISQ hardware. As a concrete example, we start by first generating a synthetic pair of correlated time series using the stochastic differential equation $dX^i_t = \mu^i dt + \sum_{j} \sigma^{ij}dW_t^j$, where $W_t$ represents the Brownian motion, with a drift term given by $\mu=(-1,-2)$ and correlation term given by $\sigma=\bigl( \begin{smallmatrix}1 & -0.5\\ -0.5 & 1\end{smallmatrix}\bigr)$. This series is shown in \autoref{fig:mf}(a/e) as the red curve. Given the continuous and unbounded nature of the time series, we first make the series stationary by converting it into first order difference series. To convert this continuous stationary series into a discrete one, we bin the data of both series into $2^{n_{s_i}}=2$ states by using Symbolic Aggregate approXimation (SAX)~\cite{lin2007experiencing} transformation. Thus, we can model each of the systems using only one qubit, thus requiring two qubits in total for the system. Though there is a theoretical upper bound for the environmental dimensionality from Stinespring dilation theorem, due to the inability of ansatz to access the entire Hilbert space, this is only a weak upper bound. We thus choose $n_e=2$ qubits as the environment dimension, leading to 5 qubits in total for the combined system that includes the systems and its environment. One can compute the $2\times2$ transition matrix $T$ for time steps $\mathcal{T}_k=\{1,2,10,30,50\}$. With this, we choose the Strongly Entangling circuit ansatz~\cite{2020ansatz,SM} for $W$ with $l=2$ and choose $h=2$ connected diagonal matrix as our ansatz. We start by initializing the respective subspace with the basis vector to generate the $\ket{i}^{th}$ basis after which we evolve it for said time steps $\mathcal{T}_k$ to compute the cost function given in \autoref{eq:actualloss}. We choose a mini-batch size of $30$ for each iteration and use the gradient-free optimization routine \lstinline{COBYLA}~\cite{powell1994direct}. Due to various practical considerations, we run the optimization routine to find the optimal $\vect{\alpha}^*$ using classical simulations in  \lstinline{PennyLane}~\cite{bergholm2018pennylane} after which we turn to the $11$-qubit trapped ion quantum hardware provided by \lstinline{IonQ}~\cite{2019ionq} for generating the final results using the optimal parameters shown in~\autoref{fig:mf}. 

Once the model is trained, we use the model by initializing with a starting state ($\ket{0}\otimes\ket{0}$ in our case), and based on the requirement, one can proceed with multiple read out techniques. Here, since we are interested in generating similar time series, we do a single shot readout for various time steps and collect the statistics. \autoref{fig:mf}(a/e) (blue) show the mean and variance of cumulative time series obtained by the generated by the two-state model using $\vect{\alpha}^*$ and (b/f) shows the corresponding distribution of the first-order difference of the mean series. For convenience, we refer to the mean of the generated series as $\mathbf{X}_q$ and the training data as $\mathbf{X}_d$. First, we see that the temporal feature of the time series, \textit{i.e.}, the exponentially decreasing nature of the synthetic data has been recognized by the generative model. Second, by comparing the properties of first order differences ($\partial_t\mathbf{X}_q,\partial_t\mathbf{X}_d$), we see that the model is capable of accounting for higher order temporal features in the data. This ability can be attributed to the learning process, during which we encode temporal features. First order differences play a significant role in real world applications like financial time series, where it is a stationary return series.  We now turn our focus to \autoref{fig:mf}(c/d) where we show the joint probability distribution of $\partial_t\mathbf{X}_q$ and $\partial_t\mathbf{X}_d$, where in (c) we see the apparent negative correlation we have encoded between $\mathbf{X}_d$. \autoref{fig:mf}(d) shows the correlation between generated $\mathbf{X}_q$. Our result demonstrates that the model is not just capable of capturing the effect of negative correlation but also, to a reasonable extent, the degree of correlations.

\paragraph{Discussion}: Quantum extension (qGANs) of classical Generative Networks have recently been proposed~\cite{demers2018,2019qgan}. In Ref.~\cite{zhu2021generative}, authors use qGANs to solve problems of similar spirit, however, we see that our model has the ability to capture temporal features better than that of GANs. This can be attributed to the specificity of our networks. Networks such as the one proposed in this letter are tuned specifically for learning temporal data where we associate a one-to-one map between the time evolution of our data and the quantum process we are learning. This is in stark contrast to general-purpose networks like qGANs (or Deep Neural Networks), where one gives the network the freedom to learn all the features while expecting (hoping) it to learn the minimal set required to capture the needed details.  This amounts to finding the correct model from a more expansive search space of optimal network weights than architectured networks.  Having said that, general networks can be applied to a broader set of problems while the current network is better suited for dynamical problems.

We now discuss the potential quantum advantage of using the proposed network. We start by studying the class of the learned quantum dynamical process. One can measure the degree of quantum non-Markovianity~\cite{breuer2009,SM} associated with a quantum process ($\Phi$) as 
\begin{equation}
\mathcal{N}(\Phi)=\max _{\rho_{1,2}(0)} \int_{\sigma>0} d t \sigma\left(t, \rho_{1,2}(0)\right),\label{eq:nonmarkov}
\end{equation}
where $\sigma\left(t, \rho_{1,2}(0)\right)=\frac{d}{d t} D\left(\rho_{1}(t), \rho_{2}(t)\right)$ depends on the time $t$, initial states $\rho_{1,2}(0)$ with their time evolutions given by $\rho_{1,2}(t)=\Phi(t, 0) \rho_{1,2}(0)$ and $D(\cdot,\cdot)$ is the trace distance. Although a complete numerical measure of $\mathcal{N}(\Phi)$ requires a large sampling of initial $\rho_{1,2}(0)$, our goal here is to show that the quantum process that a given classical series maps to is indeed that of a quantum non-Markovian open system. In \autoref{fig:mf}(g), we plot $\mathcal{N}(\Phi)$ for a 1D time series modeled using a $1$-qubit system and a $2$-qubit environment with $\Phi$ being the quantum CPTP process that evolves the system state. Inset shows that the distance $D\left(\rho_{1}(t), \rho_{2}(t)\right)$ between a pair of initial states as a function of time. We clearly see that process is not only non-Markovian but the strength of non-Markovianity increases with time. From the result, it is evident that optimal quantum simulation of such classical time series involves quantum non-Markov processes, which in general, are inefficient to be accessed classically and can only be classically simulated under certain approximations~\cite{Luchnikov2019}.

Though a basic framework of $K$-Coherence was introduced in the previous section, further work is needed to understand the class of classical time series processes that belong to this set. It can be shown that there exists time series that are $K$-Coherent for some $K$, but cannot be modeled by a closed system. However, it is not known whether there is a time series that cannot be $K$-Coherent for any $K$~\cite{SM}. Recent works \cite{2020markov,2018time} have explored similar questions relating to quantum time series and their classical nature; however, we are concerned with the inverse problem --- when does a classical time series correspond to that of an open quantum system? Note that in the current setup, even though we do not allow for the most general form of open evolution, one can easily reformulate the cost function to include higher-order quantum maps --- comb processes~\cite{2008combs1,Chiri2009}. This can be done by modifying the ansatz from $V_{lh}(\vect{\alpha}, t)\rightarrow\Pi_i V^i_{lh}(\vect{\alpha}_i, t)$, where the quantum channel of each unitary acts as individual independent CPTP maps. Similar works~\cite{Agha2018, Agha2016, Korzekwa2021} have shown advantage when it comes to simulating a classical process using a quantum process for simpler assumptions on classical process. It is yet to be seen if such properties exist for $K$-Coherent processes. It is interesting to note that the problem of finding such unitary turns out to be closely related to the problem of solving a set of polynomials with integer coefficients, which is known to be hard~\cite{SM}.

In summary, we have developed a framework that utilizes a quantum resource as a black box to learn the features of an unknown process. We then propose using this quantum process as a generative network to generate unseen samples with characteristics similar to the unknown process. Our numerical and hardware results also demonstrate the model's capability in learning higher-order features. With this method, we hope to provide a quantum network with the minimum structure required to learn temporal features without losing its ability to generalize. We show potential quantum advantages in two ways. First, by showing that the learning and simulation of certain classical processes can be quantum non-Markovian in nature. Second, by providing evidence that this particular problem of learning involves solving systems of polynomial equations in a particular way, which is computationally hard in general. Nevertheless, there are still critical questions that need to be addressed in future works, such as the stability of the network with respect to noise in quantum hardware, the size of the set of $K$-Coherent time series, bounds on environment resource requirement arising due to incapability of unitary ansatz to explore the entire space.

\paragraph{Acknowledgement:} We acknowledge the use of IonQ Quantum Hardware and associated credits for this work. The views expressed are those of the authors, and do not reflect the official policy or position of IonQ team. We also acknowledge Jack S. Baker, Will Cunningham and Oktay Goktas for their discussion on optimization and Markov process and general project guide respectively.

\begin{figure}[H]
\centering
\includegraphics[width=.5\linewidth]{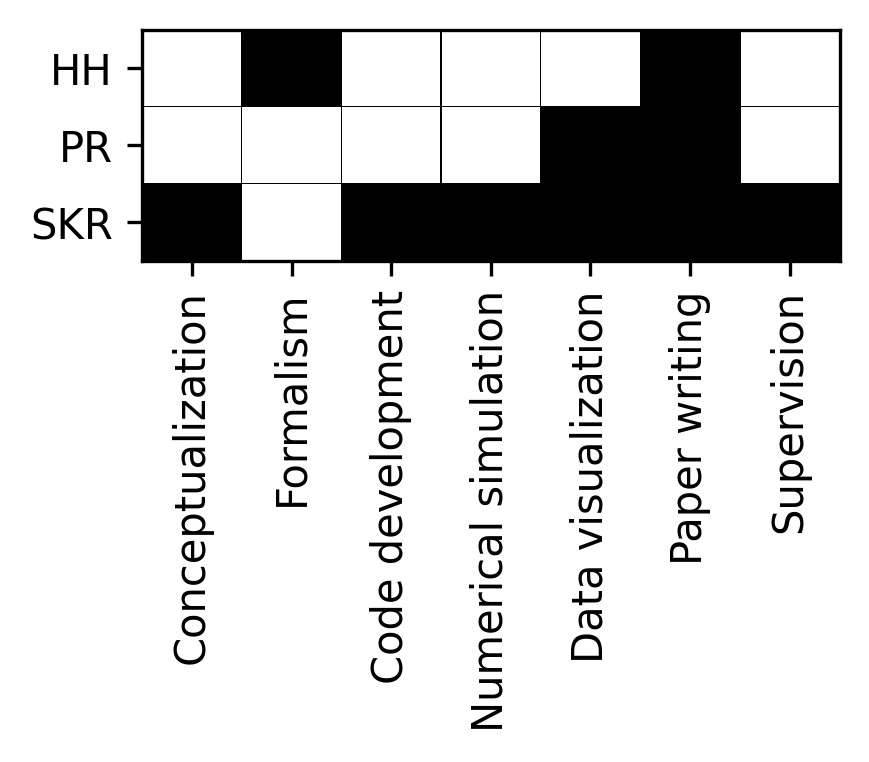}
    \caption{Contribution of various authors. Black box indicates the areas various authors contributed to. }
    \label{fig:contri}
\end{figure}

\bibliography{refs}

\clearpage
\newpage
\mbox{~}
\onecolumngrid
\begin{center}
\textbf{\large Supplemental Materials: A quantum generative model for multi-dimensional time series using Hamiltonian learning}
\end{center}

\setcounter{equation}{0}
\setcounter{figure}{0}
\setcounter{table}{0}
\setcounter{page}{1}
\makeatletter
\renewcommand{\theequation}{S\arabic{equation}}
\renewcommand{\thefigure}{S\arabic{figure}}
\renewcommand{\bibnumfmt}[1]{[S#1]}
\renewcommand{\citenumfont}[1]{S#1}

\tableofcontents

\section{Introduction}

We provide more details to support the main text. Organization of supplemental material is as follows. We start with a general quantum process~\autoref{ssec:quantum process}, which is dynamically evolved under both closed (\autoref{ssec:closed}) and open (\autoref{ssec:open}) quantum channels. It is followed by a discussion on the notion of $K-$Coherence in \autoref{sm:1d_timeseries}, where we illustrate our analysis using a $2$-qubit example. We then proceed to discuss the hardness of the learning problem in \autoref{ssec:hardness}. In \autoref{ssec:canonicald}, we demonstrate numerically the existence of a unique Hamiltonian that models the time series. A complete description of the ansatz as well as the learning procedure used in this study is given in \autoref{ssec:learn}. Finally in \autoref{ssec:nonmarkov}, we numerically explore the ability of our model to simulate non-Markovian dynamics.

\section{Quantum process}\label{ssec:quantum process}
We begin with the notion of a (pure) quantum state $\ket{\psi}$, which is a vector in a $2^n$-dimensional Hilbert space $\mathcal{H}_s$, spanned by mapping the classical states (numbered $0,1,...,2^n-1$) to the quantum states $\ket{0}, \ket{1},...,\ket{2^n-1}$ in $\mathcal{H}_s$ by $i \rightarrow \ket{i}$. Any such pure state can be written in terms of its basis given by
\begin{equation}
    \ket{\psi}=\sum_i c_i\ket{i},\label{eq:1}
\end{equation}
where $c_i$ could be a complex number, but here we restrict it to be a real number, that is, $c_i=\sqrt{p_i}$ for some $p_i \geq 0$. The quantum state is a complex vector in a Hilbert space, which is required to satisfy the property $\expval{\psi | \psi}=1$, implying $\sum_i p_i =1$.

\subsubsection{Closed quantum modeling}\label{ssec:closed}

We fix a quantum Hamiltonian ${H}$ and time evolve with initial state $\ket{\psi(0)}$ given by
\begin{align}
\ket{\psi(t)}=e^{-i{H}t}\ket{\psi(0)}:=U^t\ket{\psi(0)}\label{eq:closed time evolution},
\end{align}
where $U=e^{-i{H}}$. For $i<2^n$ and natural $t$, we let $\ket{\psi_i (t)}=e^{-i {H} t} \ket{i}$, so $\ket{\psi_i (t)}=\ket{\psi (t)}$ when $\ket{\psi (0)}=\ket{i}$. For a time series $\bold X$, natural $t$ and $i, j<2^n$, we let $p_j^i(t)$ be the probability of measuring $\ket{j}$ starting from $\ket{\psi(0)}= \ket{i}$ after $t$ steps. So, assuming that the time series was obtained from a Hamiltonian as above, we have $\sqrt{p_j^i(t)}=\bra{j} \ket{\psi_i(t)}$. It follows that $\ket{\psi_i (t)}=\sum_{j<2^n} \sqrt{p_j^i (t)}\ket{j}$.
 This corresponds to the closed evolution of quantum state  $\ket{\psi(0)}$ under ${H}$.

Any unitary matrix $U$ of dimension $N$ can naturally be linked to a stochastic transition matrix $T$ by defining the $(i, j)$-entry of $T$ as follows:
\begin{equation*}
    U_{ij} = m_{ij}e^{i\phi_{ij}}\rightarrow T_{ij}=|U_{ij}|^2=m^2_{ij} \label{eq:a1}
\end{equation*}
where $m_{ij},\phi_{ij} \in \mathrm{R}$. Because of the unitarity of $U$, it is clear that the matrix $T$ is stochastic, \textit{i.e.},
\begin{align*}
    T_{ij} &\geq 0 \\
  \sum_{j} T_{ij} &= 1
\end{align*}

However, not every stochastic matrix $T$ corresponds to a unitary $U$. The set of matrices that correspond to unitary matrices are called doubly stochastic  matrix~\cite{marshall1979inequalities} and have the additional property $\sum_i T_{ij}=1$ $\forall i$. A doubly stochastic matrix is, by definition, a matrix of non-negative elements such that the elements in each row and each column sum up to 1. And not every doubly-stochastic matrix can be associated with a unitary matrix as the rows and columns of a unitary matrix have to obey orthogonality conditions which imposes further restrictions on the matrix elements. One therefore defines the subset of doubly stochastic matrices $T$ which satisfy $T_{ij} = \norm{U_{ij}}^2$ for some unitary matrix $U$ as unistochastic  matrices~\cite{marshall1979inequalities}.

Suppose that we are given a time series $\mathbf{s}$ and a sequence $K=\{k_0,...,k_{n-1}\}$ of time steps. Suppose in addition that for each $k\in K$, the matrix $T(k)$ whose $(i, j)$-entry is the probability of going from $i$ to $j$ in $k$ steps, is a  unistochastic matrix $T(k)$. The goal is to find a fixed quantum Hamiltonian ${H_s}$, whose unitary evolution ($U_s^k=e^{-i{H_s}k}$) satisfies the following
\begin{equation*}
    \norm{\mel{j}{U_s^k}{i}}^2=p_j^i(k)=T_{ij}(k)\quad \forall k\in K.
\end{equation*}

Here, we implicitly assume the existence of such unitaries for our time series. Note that unistochasticity means $\sum_ip_j^i(k)=\sum_jp_j^i(k)=1$.

It is also important to note that the problem of finding such a Hamiltonian is closely related to finding a unitary matrix which when raised to a general power $k$ with individual elements squared, results in matrix $T(k)$ for every $k \in K$.

\subsubsection{Open quantum modeling}\label{ssec:open}

We start by considering the case where the system defined in \autoref{eq:1} is part of a bigger Hilbert space $\mathcal{H}=\mathcal{H}_s\otimes \mathcal{H}_e$ with $\mathcal{H}_e$ corresponding to some outer environment. Let $\bm{\rho}$ be the total density matrix of this combined system. Now extending the definition in \autoref{eq:1}, we have $\rho_s(t)=\ket{\psi_s(t)}\bra{\psi_s(t)}$. In this setting, given a total density matrix $\bm{\rho}(0)$ at time $t=0$, final state of the system is given by the partial trace
\begin{equation*}
    \rho_{s}\left(t\right)=\operatorname{Tr}_{e}\left[U\left(t\right) \bm{\rho}\left(0\right) U^{\dagger}\left(t\right)\right] \label{eq:open time ev}.
\end{equation*}

In general, total evolution is not factorizable, \textit{i.e.}, $U\neq U_s \otimes U_e$ (where $U_s$ and $U_e$ are obtained by restricting $U$ to the basis elements of the system and the environment, respectively), and thus both the system and the environment are modeled to be interacting with each other. One can now analyze \autoref{eq:closed time evolution} and rewrite it as a \textit{dynamical map} acting purely on the system $\mathcal{H}_s$ and connecting the states of the system $S$ at time $0$ to $t$, given by 
\begin{equation*}
    \mathcal{E}_{\left(t, 0\right)}: \rho_{s}\left(0\right) \rightarrow \rho_{s}\left(t\right). \label{arrows}
\end{equation*}

In general, $\mathcal{E}$ not only depends on the global evolution operator $U$, but also on the properties of system and environment. We will now restrict ourselves to a special class of maps called the Universal Dynamical Maps (UDM), which is defined as follows~\cite{bonzio2013open}.

\begin{definition}
A dynamical map is a UDM if and only if it is induced from an extended system with the initial condition $\bm{\rho}(0)=\rho_s(0)\otimes \rho_e(0)$, where $\rho_e(0)$ is fixed for $\rho_s(0)$. That is, a dynamical map is a UDM if the corresponding arrow makes the diagram in Figure 1 commute.
\end{definition}

These maps require the system to start from separable states, and the entanglement is introduced subsequently. This can be made clear with the representation shown in \autoref{fig:maps}.

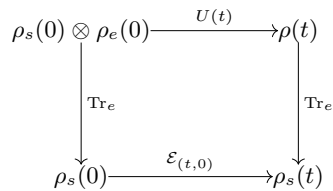
\begin{figure}
    \centering
\begin{tikzcd}[row sep=5em, column sep=5em] %
    \rho_s(0)\otimes \rho_e(0)
    \arrow{r}{U(t)} 
    \arrow{d}{\text{Tr}_e}
&
    \bm{\rho}(t)
    \arrow{d}{\text{Tr}_e}   \\
    \rho_s(0)
    \arrow{r}{\mathcal{E}_{\left(t, 0\right)}}
&
    \rho_s(t)
\end{tikzcd}
\caption{A universal dynamical map being considered  and its relation to the evolution of system $+$ environment.}\label{fig:maps}
\end{figure}

This consideration of UDMs instead of generic maps is important because we want the system to be separable at initial time. However, this does not restrict the system from becoming entangled as it evolves. For instance, without restricting oneself to UDMs, one could very well map the discrete classical states to a state that is entangled with the environment --- $i \rightarrow \rho^i_{s+e}$, instead we only consider classical to quantum mappings which are given by the form $i \rightarrow \ket{i}\otimes\ket{e}$ (more specifically we set $\ket{e}=\ket{0}^{\otimes n_e}$, where $n_e$ is the size of environment Hilbert space).

With this setting, given a classical time series $X\in S^n$ with transition matrix $T(k)$ for set of $K:=\{k_0,k_1..\}$, the goal in the open case is to find a global Hamiltonian ${H}\in \mathcal{H}$, the unitary evolution ($U^k=e^{-i{H}k}$) of which satisfies
\begin{equation*}
  \operatorname{Tr}_{s} \operatorname{\hat{P}_j} \operatorname{Tr}_{e}\left[ U\left(k\right) \rho_i \left(0\right) U^{\dagger}\left(k\right)\right]=T_{ij}(k)\quad \forall k\in K, 
\end{equation*}

where $\hat{P}_j$ is the projection operator on the $j^{th}$ basis state of system and $\rho_i=\ket{i}\bra{i}_s\otimes\ket{0}\bra{0}_e$. Unlike the previously considered closed case, we are not constrained by the unitarity of $U$ as $\mathcal{E}$ can be any general universal dynamical map. This lifts off the requirement that $T(k)$ needs to be unistochastic and thus, it can model a more general transition matrix. \autoref{fig:maps} gives the visual analogue of this process. 

To better understand the resource needed for this type of system, we can invoke Stinespring's theorem. Even though the theorem applies more generally to completely positive (not necessarily trace-preserving) maps between $C^*-$algebras, we use it for our UDMs. 

\begin{theorem}[Stinespring's dilation] \label{stinespring}
Let $\mathcal{E}:S(\mathcal{H}_s) \mapsto S(\mathcal{H}_s)$ be a completely positive and trace-preserving map between states on a finite-dimensional Hilbert space $\mathcal{H}_s$ of dimension $n_s$. Then there exists a Hilbert space $\mathcal{H}_e$ with dimension $n_e$ and a unitary operation $U$ on $\mathcal{H}_s \otimes \mathcal{H}_e$  such that
\begin{equation*}
    \mathcal{E}(\rho_s)=\text{Tr}_e\left[U \left(\rho_s(0)  \otimes \rho_e(0)\right)U^{\dagger}\right]\label{eq:stin}
\end{equation*}
$\forall \rho \in S(\mathcal{H}_s) $ with $n_e \leq n_s^2$.
\end{theorem}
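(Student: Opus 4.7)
The plan is to reduce Stinespring's dilation to the Kraus representation, which in turn follows from the Choi--Jamio\l{}kowski isomorphism, and then to build the unitary by isometric extension. First I would establish that any CPTP map $\mathcal{E}: S(\mathcal{H}_s) \to S(\mathcal{H}_s)$ admits a Kraus decomposition $\mathcal{E}(\rho_s) = \sum_{k=1}^{r} K_k \rho_s K_k^{\dagger}$ with operators $K_k: \mathcal{H}_s \to \mathcal{H}_s$ satisfying the completeness relation $\sum_k K_k^{\dagger} K_k = I_s$. The bound $r \leq n_s^2$ comes from diagonalizing the Choi matrix $C_{\mathcal{E}} = (\mathcal{E} \otimes \mathrm{id})(|\Omega\rangle\langle\Omega|)$, which is a positive operator on $\mathcal{H}_s \otimes \mathcal{H}_s$ and hence has rank at most $n_s^2$; each nonzero eigenvector furnishes one Kraus operator.

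Next I would choose an environment $\mathcal{H}_e$ of dimension $n_e = r \leq n_s^2$ with orthonormal basis $\{|k\rangle_e\}_{k=1}^{n_e}$, designate $|0\rangle_e := |1\rangle_e$ as the fixed reference state, and define a linear map $V: \mathcal{H}_s \to \mathcal{H}_s \otimes \mathcal{H}_e$ by $V|\psi\rangle_s = \sum_{k=1}^{n_e} (K_k |\psi\rangle_s) \otimes |k\rangle_e$. The completeness relation then gives $V^{\dagger} V = \sum_k K_k^{\dagger} K_k = I_s$, so $V$ is an isometry, and identifying $V$ with a linear map on the subspace $\mathcal{H}_s \otimes |0\rangle_e$ of $\mathcal{H}_s \otimes \mathcal{H}_e$ yields a partial isometry between two subspaces of the same ambient Hilbert space of dimension $n_s n_e$.

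I would then extend $V$ to a full unitary $U$ on $\mathcal{H}_s \otimes \mathcal{H}_e$: the domain $\mathcal{H}_s \otimes |0\rangle_e$ has dimension $n_s$ and its orthogonal complement has dimension $n_s(n_e - 1)$, as does the orthogonal complement of the image of $V$. Picking any unitary identification of these two complements and combining with $V$ produces the required unitary $U$. Finally I would verify the dilation identity by direct computation: expanding $\mathrm{Tr}_e$ in the basis $\{|k\rangle_e\}$,
\begin{equation*}
    \mathrm{Tr}_e\bigl[U(\rho_s \otimes |0\rangle\langle 0|_e)U^{\dagger}\bigr] = \sum_{k} \langle k|_e V \rho_s V^{\dagger} |k\rangle_e = \sum_k K_k \rho_s K_k^{\dagger} = \mathcal{E}(\rho_s),
\end{equation*}
since $U$ agrees with $V$ on $\mathcal{H}_s \otimes |0\rangle_e$.

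The main obstacle is the Kraus representation step together with the dimension bound $r \leq n_s^2$; once these are in place the construction of $U$ is a standard isometric extension and the partial trace identity is a one-line verification. Proving the Kraus decomposition cleanly requires either invoking Choi's theorem on completely positive maps or carrying out the eigendecomposition of $C_{\mathcal{E}}$ by hand, both of which are the only nontrivial inputs to the argument.
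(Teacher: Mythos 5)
Your proposal is correct: the Choi-matrix eigendecomposition gives a Kraus representation with $r\leq n_s^2$ operators, the map $V\ket{\psi}_s=\sum_k (K_k\ket{\psi}_s)\otimes\ket{k}_e$ is an isometry by the completeness relation, the dimension count $n_s(n_e-1)$ for both orthogonal complements lets you extend $V$ to a unitary $U$, and the partial-trace computation closes the argument. Note, however, that the paper does not prove this theorem at all --- it is invoked as a known result (Stinespring's dilation) with a citation, and is used only to justify that the environment dimension need scale at most quadratically in $n_s$. So there is no proof in the paper to compare against; your argument is the standard textbook derivation (Kraus form via Choi's theorem, then isometric extension) and it is sound. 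The only cosmetic point is your relabeling $\ket{0}_e:=\ket{1}_e$, which is harmless but could simply be avoided by indexing the Kraus operators from $0$; also, the theorem as stated allows a general fixed $\rho_e(0)$, while you take it pure, which is the standard (and sufficient) choice consistent with the paper's usage of $\ket{0}\bra{0}_e$ elsewhere.
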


Thus, in the simplest case where the time series is modeled as an open quantum system interacting with environment, the maximum dimensionality needed to describe the environment scales quadratically. 

In both open and closed cases, the goal turns out to learn the time dependent unitary $U(t)=e^{-i{H}t}$, which in turn results in learning the Hamiltonian ${H}$ that is responsible for producing the quantum states that have the same quantum correlations \textit{entanglement amplitudes} as the the classical correlations of classical time series. In the open case, we not only learn the Hamiltonian of the system(${H}_s$), but also the Hamiltonian of environment (${H}_e$) and the coupling between them (${H}_{se}$) as given in \autoref{eq:ham}.

\begin{equation*}
{H}=
\begin{bmatrix}
\,H_s & H_{se}\,  \\
\,H_{se} & H_e\, \\
\end{bmatrix}. \label{eq:ham}
\end{equation*}
 
It is easy to see that the closed case is a special case where ${H}_e={H}_{se}=0$. This formulation gives us potential insights on the classical system's environment/bath as well as its interaction by looking at the Hamiltonian terms in respective subspaces.

\section{On the extent of $K$-Coherence} \label{sm:1d_timeseries}
In this section, we illustrate the theoretical ideas behind our algorithm using the case of a one-dimensional time series. While these ideas can be formulated for arbitrarily large systems, we shall consider a fixed $1$-qubit system $\mathcal H$ and a fixed $1$-qubit environment $\mathcal{H}_*$. We shall also assume for simplicity that all unitaries in question have real entries. The general case is similar though somewhat more technically complicated. As we saw earlier, our time series is obtained from some underlying unitary $U$ in the sense that its transition probabilities satisfy the equation $T_{ij}(k)=Tr(\ket{j}\bra{j} Tr_{\mathcal{H}_*}(U^k \ket{i} \bra{i} \otimes \ket{0} \bra{0} {U^{k\dagger}}) )$ for all relevant $i, j$ and $k$. Our task was to learn a unitary that will give rise to similar transition probabilities. We shall show now that this is equivalent to finding a unitary whose entries solve a certain set of polynomial equations.
Let 
\begin{align*}
\rho_0 &= \ket{0} \bra{0} \otimes \ket{0} \bra{0}= \begin{pNiceMatrix} 
    1 & 0 & 0 & 0\\ 
    0 & 0 & 0 & 0\\ 
    0 & 0 & 0 & 0\\ 
    0 & 0 & 0 & 0 
\end{pNiceMatrix},\\
\rho_1 &= \ket{1} \bra{1} \otimes \ket{0} \bra{0}= \begin{pNiceMatrix} 
    0&0 &0 &0 \\ 
    0&0&0&0 \\ 
    0&0&1&0 \\ 
    0&0&0&0 
\end{pNiceMatrix}.
\end{align*}
Let $U$ be a $4 \times 4$ unitary and fix a natural number $k$. Suppose that $U^k=(a_{ij})$, then ${U^k}^{\dagger}=(a_{ji})$. One can now show that 
\begin{align*}
U^k \rho_0 {U^k}^{\dagger}= \begin{pNiceMatrix} a_{11}a_{11} & a_{11}a_{21} & a_{11}a_{31} & a_{11}a_{41}\\ a_{21}a_{11} & a_{21} a_{21} & a_{21}a_{31} & a_{21}a_{41} \\ a_{31}a_{11} & a_{31} a_{21} & a_{31}a_{31} & a_{31}a_{41} \\
a_{41} a_{11} & a_{41}a_{21} & a_{41} a_{31} & a_{41} a_{41}
\end{pNiceMatrix}
\end{align*}
along with 
\begin{equation*}
U^k \rho_1 U^{k\dagger} = 
\begin{pNiceMatrix} 
a_{13}a_{13} & a_{13}a_{23} & a_{13}a_{33} & a_{13}a_{43}\\ a_{23}a_{13} & a_{23} a_{23} & a_{23}a_{33} & a_{23}a_{43} \\ a_{33}a_{13} & a_{33} a_{23} & a_{33}a_{33} & a_{33}a_{43} \\
a_{43} a_{13} & a_{43}a_{23} & a_{43} a_{33} & a_{43} a_{43}
\end{pNiceMatrix}. 
\end{equation*}
Let $i\in \{0, 1\}$ and suppose that 
\begin{align*}
Tr_{\mathcal{H}_*}(U^k \rho_i {U^k}^{\dagger})= \begin{pNiceMatrix} a & b \\ c & d \end{pNiceMatrix},
\end{align*}
then
\begin{align*}
\ket{0} \bra{0}Tr_{\mathcal{H}_*}(U^k \rho_i {U^k}^{\dagger})=\begin{pNiceMatrix} a & b \\ 0 & 0 \end{pNiceMatrix},
\end{align*}
hence $Tr\left(\ket{0} \bra{0}Tr_{\mathcal{H}_*}(U^k \rho_i {U^k}^{\dagger})\right)=a$. Similarly, 
\begin{align*}
Tr\left(|1)(1| Tr_{\mathcal{H}_*}(U^k \rho_i {U^k}^{\dagger})\right)=Tr\begin{pNiceMatrix} 0 & 0 \\ c & d \end{pNiceMatrix}=d.
\end{align*}
Now consider the basis $\{\ket{0} \otimes \ket{0}, \ket{0} \otimes \ket{1}, \ket{1} \otimes \ket{0}, \ket{1} \otimes \ket{1}\}$ for $\mathcal H \otimes \mathcal{H}_*$, then by the formula for computing the partial trace we get, for $i=0$:
\begin{align*}
    a &= a_{11}a_{11}+a_{21}a_{21}\\
    d &= a_{31}a_{31}+a_{41}a_{41},
\end{align*}
and for $i=1$:
\begin{align}
a &= a_{13}a_{13}+a_{23}a_{23} \\
d &= a_{33}a_{33}+a_{43}a_{43}
\end{align}
\\
We therefore arrive at the following conclusion.

\textbf{Conclusion}: Let $U$ be a $4 \times 4$ unitary and let $k$ be a natural number and denote $U^k=(a_{ij})$. For $i, j \in \{0, 1\}$, let $T_{ij}'(k):=Tr(\ket{j} \bra{j} Tr_{\mathcal{H}_*}(U^k \ket{i} \bra{i} \otimes \ket{0} \bra{0} {U^{k\dagger}}))$,  then we have,

\begin{align*}
T_{00}'(k) &= a_{11}a_{11}+a_{21}a_{21}\\
T_{01}'(k) &= a_{31}a_{31}+a_{41}a_{41}\\
T_{10}'(k) &= a_{13}a_{13}+a_{23}a_{23}\\
T_{11}'(k) &= a_{33}a_{33}+a_{43}a_{43}
\end{align*}

Now, let us continue using the notation in the above conclusion and let $(b_1,...,b_{16})$ enumerate the entries of $U$, then for every $i, j \in \{1,...,4 \}$, there is a polynomial $p_{ij}^k \in Z[x_1,...,x_{16}]$ such that $a_{ij}=p_{ij}^k(b_1,...,b_{16})$ (note that the polynomials do not depend on $U$ but only on $i, j$ and $k$). Therefore, the conclusion can be reformulated using the following polynomial equations:

\begin{align}
T_{00}'(k) &= (p_{11}^k p_{11}^k +p_{21}^k p_{21}^k)(b_1,...,b_{16})\\
T_{01}'(k) &= (p_{31}^k p_{31}^k+ p_{41}^k p_{41}^k)(b_1,...,b_{16})\\
T_{10}'(k) &= (p_{13}^k p_{13}^k+p_{23}^k p_{23}^k)(b_1,...,b_{16})\\
T_{11}'(k) &= (p_{33}^k p_{33}^k+p_{43}^k p_{43}^k)(b_1,...,b_{16}).
\label{eqn:poly}
\end{align}
Therefore, given a set of transition probabilities $\{T_{ij}(k) : i, j \in \{0, 1\}, k\in K\}$, finding a unitary that satisfies  $T_{ij}(k)=Tr(\ket{j} \bra{j} Tr_{\mathcal{H}_*}(U^k \ket{i} \bra{i} \otimes \ket{0} \bra{0} {(U^{\dagger})^k))}$ for all $i, j$ and $k$ is equivalent to finding a unitary $U=(b_{ij})$ whose entries solve the polynomial equations~\autoref{eqn:poly1} above for all $k\in K$ where $T_{ij}'(k)$ on the left hand side is replaced by $T_{ij}(k)$. Recall that the assumption of $K$-Coherence implies that there is at least one unitary as required. We shall discuss later the implications of solving systems of polynomial equations using unitary matrices.

 We shall now formulate and prove the observation that the transition probabilities that are induced from unitaries on a system and environment via CPTP maps are exactly the transition probabilities that satisfy the definition of $K$-Coherence. 

\begin{theorem} If $U$ is a unitary as before, then $\mathcal{E}(\rho)=Tr_{\mathcal{H}_*}(U \rho \otimes \ket{0} \bra{0} U^{\dagger})$ is CPTP.
\end{theorem}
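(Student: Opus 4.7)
The plan is to verify complete positivity and trace preservation separately, either by a direct Kraus decomposition or by decomposing $\mathcal{E}$ into three elementary maps and invoking the fact that composition preserves CPTP.

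The cleanest route is the explicit Kraus decomposition. Fixing any orthonormal basis $\{\ket{e_i}\}$ of $\mathcal{H}_*$, I would define operators $K_i:\mathcal{H}_s\to\mathcal{H}_s$ by $K_i = (I_s\otimes \bra{e_i})\, U\, (I_s\otimes \ket{0})$. The first step is to expand $U(\rho\otimes\ket{0}\bra{0})U^\dagger$ and apply the definition of the partial trace $\mathrm{Tr}_{\mathcal{H}_*}(X)=\sum_i (I_s\otimes\bra{e_i}) X (I_s\otimes\ket{e_i})$. A direct rearrangement then yields $\mathcal{E}(\rho)=\sum_i K_i\rho K_i^\dagger$, which is manifestly a Kraus (Stinespring) form, hence completely positive.

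For trace preservation, I would compute
\begin{equation*}
\sum_i K_i^\dagger K_i = (I_s\otimes\bra{0})\,U^\dagger\!\left(\sum_i I_s\otimes \ket{e_i}\bra{e_i}\right)\!U\,(I_s\otimes\ket{0}) = (I_s\otimes\bra{0})\,U^\dagger U\,(I_s\otimes\ket{0})=I_s,
\end{equation*}
using $\sum_i \ket{e_i}\bra{e_i}=I_{\mathcal{H}_*}$ and unitarity of $U$. From $\sum_i K_i^\dagger K_i=I_s$ it follows that $\mathrm{Tr}(\mathcal{E}(\rho))=\mathrm{Tr}(\rho)$ for every $\rho$, so the map is trace preserving.

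The only subtlety I anticipate is bookkeeping with the tensor factors when plugging in the partial-trace formula; everything else reduces to $U^\dagger U=I$ and the resolution of identity. If desired, one can instead present $\mathcal{E}$ as the composition $\rho \mapsto \rho\otimes\ket{0}\bra{0}\mapsto U(\rho\otimes\ket{0}\bra{0})U^\dagger \mapsto \mathrm{Tr}_{\mathcal{H}_*}(\cdot)$, note that tensoring with a fixed state, conjugation by a unitary, and the partial trace are each CPTP (the first and third have obvious Kraus decompositions, the middle has the single Kraus operator $U$), and conclude by closure of CPTP maps under composition; this alternative is essentially the same argument repackaged.
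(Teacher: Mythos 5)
Your proposal is correct, and your primary route differs from the one taken in the paper. The paper argues by decomposition: it checks by an explicit $4\times 4$ matrix computation that $\rho \mapsto \rho \otimes \ket{0}\bra{0}$ preserves the trace, asserts that a similar computation gives complete positivity, cites the known facts that conjugation by a unitary and the partial trace are CPTP, and concludes by closure of CPTP maps under composition --- exactly the ``alternative'' you sketch in your last sentence. Your main argument instead exhibits an explicit Kraus decomposition $\mathcal{E}(\rho)=\sum_i K_i \rho K_i^\dagger$ with $K_i=(I_s\otimes\bra{e_i})\,U\,(I_s\otimes\ket{0})$ and verifies $\sum_i K_i^\dagger K_i = I_s$ from unitarity and the resolution of the identity. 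The trade-off: the paper's route is shorter and modular but leans on black-boxed facts (and leaves the complete-positivity check of the tensoring map as ``a similar computation''), whereas your Kraus-operator argument is fully self-contained, establishes CP and TP in one stroke, and as a bonus produces the operator-sum representation of the channel explicitly, which is the object one would want anyway when analyzing the induced dynamics (e.g., the transition probabilities $T_{ij}(k)$ computed entrywise elsewhere in the paper are exactly $\sum_i |\bra{j}K_i\ket{i'}|^2$-type quantities). Both arguments are sound; yours proves slightly more.
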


\begin{proof}
Given $\rho= \begin{pNiceMatrix} a & b \\ c & d \end{pNiceMatrix}$, we have 
\begin{align}
\rho \otimes \ket{0} \bra{0}= \begin{pNiceMatrix} a & 0 & b & 0 \\ 0 & 0& 0 & 0 \\ c & 0 & d & 0 \\ 0 & 0 & 0 & 0 \end{pNiceMatrix},
\end{align}
so $Tr(\rho)=Tr(\rho \otimes \ket{0} \bra{0})$, hence $\rho \mapsto \rho \otimes \ket{0} \bra{0}$ is trace preserving. A similar computation shows that it's completely positive, hence CPTP. The maps $A \mapsto UAU^{\dagger}$ and $Tr_{\mathcal{H}_*}$ are known to be CPTP. As CPTP maps are closed under composition, we are done. 
\end{proof}

\begin{definition}
Given a unitary $U$ acting on $\mathcal H \otimes \mathcal{H}_*$ and a natural number $k$, let $\mathcal{E}_k^U$ be the CPTP map $\mathcal{E}_k^U (\rho)=Tr_{\mathcal{H}_*}(U^k \rho \otimes \ket{0} \bra{0}{U^{k\dagger}})$. By the precious theorem, $\mathcal{E}_k^U$ is indeed CPTP.
\end{definition}

We shall now prove the existence of a $K$-coherent time series that cannot be modelled by a closed system.

\begin{theorem} \label{1d-series}
There exists a $\{1\}$-Coherent time series that cannot be modelled by a closed system.
\end{theorem}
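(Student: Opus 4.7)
The plan is to exhibit an explicit $1$-bit transition matrix that is stochastic but not doubly stochastic, realize it as the open-system dynamics of some unitary on $\mathcal H\otimes\mathcal H_*$, and then rule out every closed-system realization via the fact that closed-case transition matrices are forced to be doubly stochastic. A minimal candidate is
\begin{equation*}
T(1)=\begin{pNiceMatrix} 1 & 0 \\ 1 & 0 \end{pNiceMatrix},
\end{equation*}
corresponding to a time series in which both classical initial states collapse deterministically to $\ket{0}$ after one step.

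To certify $\{1\}$-Coherence I would exhibit the real unitary $U$ on $\mathcal H\otimes\mathcal H_*$ that swaps system and environment, i.e., the permutation fixing $\ket{0}\otimes\ket{0}$ and $\ket{1}\otimes\ket{1}$ and interchanging $\ket{0}\otimes\ket{1}$ with $\ket{1}\otimes\ket{0}$. Then $U(\ket{i}\bra{i}\otimes\ket{0}\bra{0})U^{\dagger}=\ket{0}\bra{0}\otimes\ket{i}\bra{i}$ for each $i\in\{0,1\}$, and the partial trace over $\mathcal H_*$ collapses both cases to $\ket{0}\bra{0}$. Substituting the entries of this specific $U$ into the polynomial characterisation of $T'(1)$ in the Conclusion immediately preceding the theorem gives $T'(1)=T(1)$, so $\mathcal{E}_1^U$ witnesses $\{1\}$-Coherence.

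To rule out a closed-system realization, I would note that a closed model for $k=1$ demands a $2\times 2$ unitary $V$ with $|V_{ij}|^2=T_{ij}(1)$, and orthonormality of the columns of $V$ forces $\sum_i T_{ij}(1)=1$ for every $j$. The first column of the chosen $T(1)$ sums to $2$, so no such $V$ exists; therefore no closed model reproduces this time series.

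The argument has essentially no technical obstacle. The substantive point is the structural observation that the open-system construction enlarges the admissible class of transition matrices from the unistochastic to the full stochastic class, so any stochastic but non-doubly-stochastic example separates the two regimes. The only explicit computation is the partial trace for the swap unitary, which is a direct specialisation of the general formula already derived in the preceding Conclusion.
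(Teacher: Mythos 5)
Your proof is correct, but it takes a genuinely different route from the paper's. The paper's witness is $U=CNOT(H\otimes I)$, whose induced transition probabilities are $a_{ij}(1)=\tfrac12$ for all $i,j$; the contradiction is then extracted by observing that the matrix whose \emph{entries are these probabilities} is not unitary. You instead use the SWAP unitary, which sends both $\ket{i}\otimes\ket{0}$ to $\ket{0}\otimes\ket{i}$ and hence, after tracing out $\mathcal{H}_*$, collapses both initial states to $\ket{0}$, giving $T_{00}(1)=T_{10}(1)=1$ and $T_{01}(1)=T_{11}(1)=0$; the obstruction is that this matrix is not doubly stochastic, so no unitary $V$ can satisfy $\lVert\mel{j}{V}{i}\rVert^2=T_{ij}(1)$. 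Your version is arguably the more robust of the two: failure of double stochasticity kills a closed model under any reading of ``modelled by a closed system,'' and it directly instantiates the paper's own earlier remark that open dynamics escape the unistochastic class. By contrast, the paper's all-$\tfrac12$ matrix is itself unistochastic (it equals $|H_{ij}|^2$ for the Hadamard $H$), so the paper's contradiction depends on the convention that the closed model's matrix entries are the probabilities themselves rather than their square roots --- a convention in tension with the requirement $\lVert\mel{j}{U_s^k}{i}\rVert^2=T_{ij}(k)$ stated in the closed-modeling section. What the paper's example buys in exchange is a nontrivial, non-deterministic time series (a genuinely mixing process), whereas yours is degenerate in the sense that the dynamics are deterministic after one step; both suffice for the existence claim.
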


\begin{proof}
Let $U=CNOT(H \otimes I)=\frac{1}{2} \begin{pNiceMatrix} 1 & \phantom{-}0 & \phantom{-}1 & \phantom{-}0 \\ 0 & \phantom{-}1 & \phantom{-}0 &-1 \\ 1 & \phantom{-}0 & -1 & \phantom{-}0 \\ 0 & \phantom{-}1 & \phantom{-}0 & \phantom{-}1 \end{pNiceMatrix}$ and consider a time series $\bold X$ with transition probabilities $a_{ij}(1)=Tr(\ket{j} \bra{j} Tr_{\mathcal{H}_*} (U \ket{i} \bra{i} \otimes \ket{0} \bra{0} U^{\dagger}))$, so $\bold X$ is $K$-coherent by definition and by the previous theorem. Note that by the explicit description of $a_{ij}(k)$ that we obtained in the beginning of this section, it follows that $0\leq a_{ij}(1)$ for all $i, j \in \{0, 1\}$ and it is easy to see that $a_{i0}(1)+a_{i1}(1)=1$ ($i\in \{0, 1\}$), so the above are indeed probability distributions. Suppose that $\bold X$ can be modelled by a closed system, then there is a unitary $V$ such that $\bra{j} V \ket{i}=a_{ij}(1)$ for all $i, j \in \{0, 1\}$. Therefore, by the characterization of the $a_{ij}(1)$ that we previously obtained,
\begin{align}
V=\begin{pNiceMatrix} a_{00}(1) & a_{10}(1) \\ a_{01}(1) & a_{11}(1) \end{pNiceMatrix} = \begin{pNiceMatrix} \frac{1}{2} & \frac{1}{2} \\ \frac{1}{2} & \frac{1}{2} \end{pNiceMatrix}.
\end{align}
But $V$ is not unitary, and so we get a contradiction. It follows that $\bold X$ cannot be modelled by a closed system. 
\end{proof}

We conclude with the following conjecture.

\begin{conjecture}\label{conject}
There exists a time series $\bold X$ and some $K$ such that $\bold X$ is not $K$-Coherent. 
\end{conjecture}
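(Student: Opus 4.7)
The plan is to prove this conjecture by a dimension-counting argument carried out in the setting used throughout this section — one qubit for the system, one qubit for the environment, and all candidate unitaries taken to be real orthogonal. In that setting the space of candidate $U$ is the real orthogonal group $O(4)$, which is a smooth real manifold of dimension $6$, and every coordinate of every transition matrix $T'(k)$ induced by $U$ is a fixed polynomial in the entries of $U$.

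For a finite window $K=\{k_0,\dots,k_{n-1}\}$, I would introduce the transition-probability map
\[
\Phi_K:O(4)\to\mathbb R^{2n},\qquad \Phi_K(U)=\bigl(T_{00}'(k_i),\,T_{10}'(k_i)\bigr)_{i=0}^{n-1},
\]
with $T_{ij}'(k)$ as in the Conclusion earlier in this section. Each coordinate of $\Phi_K$ is a polynomial in the entries of $U$ — the entries of $U^{k}$ are polynomials in those of $U$, and each $T_{ij}'(k)$ is a sum of two squared entries of $U^{k}$ — so $\Phi_K$ is $C^{\infty}$. Picking $n\ge 4$ makes the source dimension ($6$) strictly smaller than the target dimension ($2n\ge 8$), so by Sard's theorem the image $\Phi_K(O(4))$ has Lebesgue measure zero in $\mathbb R^{2n}$. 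Note that the two coordinates $T_{00}'(k)$ and $T_{10}'(k)$ indeed fully determine the $2\times2$ stochastic matrix $T'(k)$, since stochasticity already fixes the remaining entries.

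It remains to exhibit an actual time series $\mathbf X$ on $\{0,1\}$ whose $k$-step transition probabilities, for $k\in K$, give a tuple in the (full-measure) complement of $\Phi_K(O(4))$. I would first pick a target tuple $\tau=(\tau_{00}(k),\tau_{10}(k))_{k\in K}$ with all coordinates strictly between $0$ and $1$ and lying in the complement of $\Phi_K(O(4))$, and then realize $\tau$ via a stationary hidden-Markov-model on a sufficiently large latent state space; strict positivity of the $\tau_{ij}(k)$ is what guarantees that a Kolmogorov-consistent joint distribution exists. Any such $\mathbf X$ is then not $K$-Coherent, which is precisely what the conjecture asserts.

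I expect the main obstacle to be not the dimension count, which is routine once $\Phi_K$ is written down, but the realization step: one must argue that the set of tuples achievable by genuine stationary processes on $\{0,1\}$ has non-empty interior in $\mathbb R^{2n}$, so that it intersects the positive-measure complement of $\Phi_K(O(4))$. A small perturbation argument around the IID process with $P(x_t=1)=\tfrac{1}{2}$ (which realizes the constant tuple $\tau_{ij}(k)=\tfrac{1}{2}$) should give this interior, after which the conjecture follows from Sard's theorem together with the polynomial description of $\Phi_K$ already established in this section.
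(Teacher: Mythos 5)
First, a point of calibration: the paper offers no proof of this statement at all --- it is explicitly left as an open conjecture --- so there is nothing to compare your argument against; if your argument were complete it would settle an open question rather than reproduce the paper's reasoning. Your core idea (Sard's theorem applied to the polynomial map $\Phi_K$ from the unitary group to the tuple of transition probabilities) is a sensible line of attack, and the count $\dim O(4)=6<2n$ is correct in the simplified setting of the section. But there are two gaps. The smaller one: the restriction to a one-qubit environment with real entries is stated in the paper to be ``for simplicity,'' and the actual definition of $K$-Coherence presumably permits complex unitaries and a larger environment. For a complex unitary on the system plus a Stinespring-sized ($n_e\le n_s^2=4$) environment the source manifold is $U(8)$, of real dimension $64$, so you would need $|K|\ge 33$; and if the definition does not bound the environment dimension at all, then for every fixed $K$ there are environments large enough that the source dimension exceeds $2n$ and Sard yields nothing. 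You must pin down which class of dilations the definition quantifies over before the dimension count means anything.

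The more serious gap is the realization step, and here the claim you rely on is false as stated. For a stationary binary process with marginal $p=P(X_t=1)$ and lag-$k$ correlation $c_k=P(X_t=1,\,X_{t+k}=1)$, one has $T_{10}(k)=(p-c_k)/p$ and $T_{00}(k)=(1-2p+c_k)/(1-p)$, so the pair $\bigl(T_{00}(k),T_{10}(k)\bigr)$ is a function of the two numbers $(p,c_k)$, and across all $k\in K$ the achievable tuples form a set of dimension at most $n+1$ inside $[0,1]^{2n}$. That set has empty interior and is itself Lebesgue-null once $n\ge 2$, so a perturbation around the IID process cannot land in ``the full-measure complement'' in any useful way: both the image of $\Phi_K$ and the set of stationary-achievable tuples are null, and measure theory alone does not separate two null sets. (If the paper's notion of time series does not impose time-homogeneity of the $k$-step statistics, the realization step becomes trivial instead, but then the quoted interior claim is unnecessary; either way the argument as written does not match the actual constraint set.) The repair is probably to compare Hausdorff dimensions rather than measures: an $(n{+}1)$-dimensional family of achievable tuples cannot be contained in the image of a $6$-dimensional manifold under a smooth map once $n+1>6$. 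But that requires proving that the parametrization $(p,c_{k_1},\dots,c_{k_n})\mapsto\bigl(T_{ij}(k)\bigr)_{k\in K}$ is an immersion on a nonempty open set of parameter values that are genuinely realized by stationary processes on $\{0,1\}$ --- precisely the step you flagged as the obstacle and did not carry out. As it stands, the proof does not close.
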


\section{On the hardness of finding the underlying unitary}\label{ssec:hardness}

We continue working under the same assumptions --- all entries of U are real numbers and $U_k=U^k$. Consider the following problem:
\begin{flushleft}
\textbf{Input}: A set of natural numbers $K=\{k_1,...,k_{n-1}\}$, and for each $k\in K$, we are given transition probabilities $\{T_{ij}(k) : i, j \in \{0, 1\}\}$, such that the probabilities satisfy the requirements of $K$-Coherence.
\end{flushleft}
\begin{flushleft}
\textbf{Output}: A $4\times 4$ unitary $U$ (whose entries are enumerated $(b_1,...,b_{16})$) such that, for each $k\in K$, we have the following:
\end{flushleft}

\begin{align}
T_{00}(k)&=(p_{11}^k p_{11}^k +p_{21}^k p_{21}^k)(b_1,...,b_{16})\\
T_{01}(k)&=(p_{31}^k p_{31}^k+ p_{41}^k p_{41}^k)(b_1,...,b_{16})\\
T_{10}(k)&=(p_{13}^k p_{13}^k+p_{23}^k p_{23}^k)(b_1,...,b_{16})\\
T_{11}(k)&=(p_{33}^k p_{33}^k+p_{43}^k p_{43}^k)(b_1,...,b_{16}).
\label{eqn:U_bs}
\end{align}

By the previous section, solving the above problem is equivalent to finding the underlying unitary that witnesses $K$-coherence. By ADD REFERENCE, the problem of solving general such systems is in $PSPACE$, and under the Generalized Riemann Hypothesis it's in $AM$ (which still contains $NP$) and no better bounds are known at the moment. As we shall see in the next section, we will be able to find a reasonably small $\epsilon$ and a unitary $U$ whose entries will satisfy:
\begin{align}
|T_{00}(k)-(p_{11}^k p_{11}^k +p_{21}^k p_{21}^k)(b_1,...,b_{16})|&<\epsilon\\
|T_{01}(k)-(p_{31}^k p_{31}^k+ p_{41}^k p_{41}^k)(b_1,...,b_{16})|&<\epsilon\\
|T_{10}(k)-(p_{13}^k p_{13}^k+p_{23}^k p_{23}^k)(b_1,...,b_{16})|&< \epsilon\\
|T_{11}(k)-(p_{33}^k p_{33}^k+p_{43}^k p_{43}^k)(b_1,...,b_{16})|&<\epsilon.
\label{eqn:U_epsilon}
\end{align}

Following this, we may consider the notion of $(\epsilon, K)$-Coherence for $\epsilon>0$, where in the definition of $K$-Coherence, we replace equations~\autoref{eqn:U_bs} with equations~\autoref{eqn:U_epsilon}. The general problem of finding arbitrarily good approximations for solutions of systems of polynomial equations with integer coefficients can be shown to be $NP$-hard using a reduction from $3$-$SAT$. This leaves several questions open:

\begin{itemize}
\item  Does $(\epsilon, K)$-Coherence for arbitrarily small $\epsilon$ imply $K$-Coherence?

\item Is every time series $(\epsilon, K)$-Coherent for an appropriate $K$ and arbitrarily small $\epsilon$'s? Our experiments may be taken as an indication that the answer might be in the affirmative for a large class of time series.
\end{itemize}

\section{Approximating Canonical Dynamics}\label{ssec:canonicald}
\begin{figure}
\includegraphics[width=0.6\linewidth]{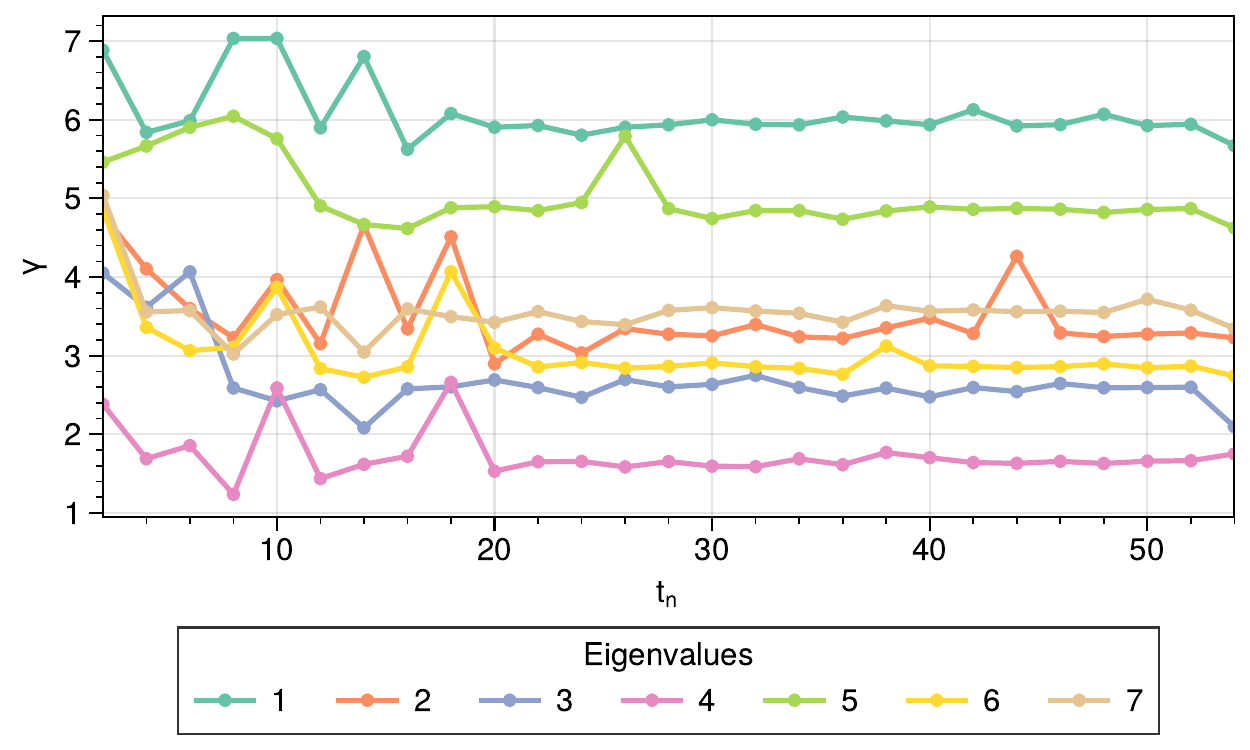}
    \caption{$\gamma$ as a function of $t_n$. $\gamma$ is indicated in \autoref{eq:ans}, which at $t=1$ are proportional to the eigenvalues of the learnt Hamiltonian. $t_n$ refers to the number of transition matrices being fed into the learning process with time steps at each point in $x-$axis having $K\in {1,2,4,...2t_n}$.   }
    \label{fig:convergence}
\end{figure}

For a natural $n$, let $K_n=\{1,...,n\}$. Note that for every set $K$ of natural numbers and for every collection of real numbers $P=\{a_{ij}(k): i, j \in \{0, 1\}, k\in K \}$ (not necessarily transition probabilities of a $K$-Coherent time series), the set of unitaries whose entries satisfy the polynomial equations of the previous section~\autoref{eqn:poly} for all $k\in K$ is compact. Denote this set by $S_K^P$. Suppose that $K\subseteq K'$ and that $P=\{a_{ij}(k): i, j \in \{0, 1\}, k\in K \}$ and $P'=\{a_{ij}'(k): i, j \in \{0, 1\}, k\in K' \}$ are given such that $a_{ij}(k)=a_{ij}'(k)$ for every $k\in K$, then $S_{K'}^{P'} \subseteq S_K^P$.

Let $U$ be a $4\times 4$ unitary as in the previous section. Fix a large $n$, and for each $k\in K_n$, consider the transition probabilities $T_{ij}(k)$ induced from the $CPTP$ map $\mathcal{E}_k^U$ as in the previous section. Let $P_n=\{T_{ij}(k) : i, j \in \{0, 1\}, k\in K_n\}$, and for every $m<n$, let $P_m=\{T_{ij}(k) : i,j \in \{0,1\}, k\in K_m\}$. Then $m<l \leq n \rightarrow S_{K_l}^{P_l} \subseteq S_{K_m}^{P_m}$. As the sets $S_{K_i}^{P_i}$ are compact, the distance $d_m:=d(S_{K_m}^{P_m}, S_{K_n}^{P_n})$ is well-defined for all $m\leq n$ and $d_m$ is decreasing with $m$.

Since the function that extracts transition probabilities $\{T_{ij}(k) : k\in K\}$ from a unitary as in the previous section is continuous, we should expect that as $m<n$ increases, the unitary in $S_{K_m}^{P_m}$ that we find will give rise to transition probabilities that approach $P_n$.

Furthermore, we conducted the following experiment, which suggests a similar type of  ``stabilization" phenomenon as we train with more time steps. From a given time series, for every $n$ we consider the learning problem $P_n$ as above, and so for each $P_n$, we increase the number of time steps we are exposing to the quantum network. Intuitively, if our assumption of the processes being emitted by a single quantum process along with our ability to model that process is true, then the resulting learned Hamiltonian would converge as we increase $t_n$. To this end, in \autoref{fig:convergence} we look at the diagonal elements of the learned unitary, which are proportional to the eigenvalue of the learned Hamiltonian. We see that the eigenvalues are convergent indicating that as more and more time steps (transition matrix) are involved in the learning process, the closer we get to a ``canonical" set of eigenvalues. This provides some evidence for the existence of a canonical $\bar{H}$ that underlies the process.

\section{Learning Procedure}\label{ssec:learn}

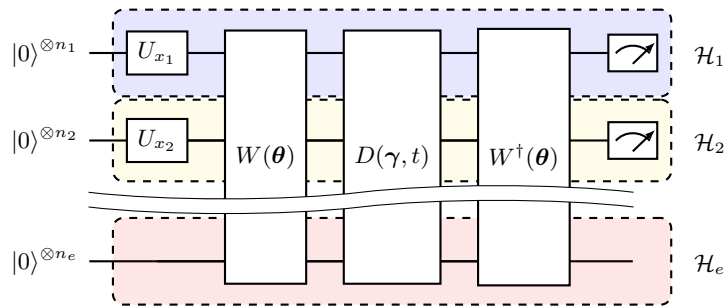
\begin{figure}[H]
    \centering
\begin{quantikz}
\lstick[wires=1]{$\ket{0}^{\otimes n_1}$}& \gate[wires=1]{U_{x_1}}\gategroup[1,steps=5,style={dashed,
rounded corners,fill=blue!10, inner xsep=2pt},background,label style={label position=right,anchor=
north,xshift=0.5cm}]{ $\mathcal{H}_1$} & \gate[wires=4]{W(\vect{\theta})} & \gate[wires=4]{D(\vect{\gamma},t)}&\gate[wires=4]{W^\dagger(\vect{\theta})}& \meter{} \\
\lstick[wires=1]{$\ket{0}^{\otimes n_2}$}& \gate[wires=1]{U_{x_2}}\gategroup[1,steps=5,style={dashed,
rounded corners,fill=yellow!10, inner xsep=2pt},background,label style={label position=right,anchor=
north,xshift=0.5cm}]{ $\mathcal{H}_2$}  & \qw& \qw& \qw & \meter{} \\
\wave&&&&&\\
\lstick[wires=1]{$\ket{0}^{\otimes n_e}$}& \qw\gategroup[1,steps=5,style={dashed,
rounded corners,fill=red!10, inner xsep=2pt},background,label style={label position=right,anchor=
north,xshift=0.5cm}]{ $\mathcal{H}_e$}& \qw& \qw& \qw& \qw
\end{quantikz}
\caption{Ansatz used as time-generative model, We start with loading our initial states in their respective sub-spaces, followed by applying the parameterized Hamiltonian evolution. Finally we measure the individual sub-spaces (marked $\mathcal{H_i}$), the probability of which is used for calculating the loss function. }\label{fig:full_ansatz}
\end{figure}

Our goal is to learn a stationary quantum Hamiltonian, $\operatorname{H}$, whose unitary time evolution generates the given time series. The corresponding unitary is denoted by $U = e^{-i\operatorname{H}t}$. In this section, we set up infrastructure for the learning procedure and test the general learning ability of the algorithm before applying it to our use case in the next section. The first step in the algorithm is to encode the initial state using the basis encoding. The aforementioned Hamiltonian, $\operatorname{H}$, is learned via training a variational circuit with appropriately chosen circuit ansatz and parameters. These parameters are trained iteratively by the variational algorithm to minimize the chosen cost function.

One of the key aspects of this process is choosing an appropriate ansatz. We follow the ansatz introduced in Ref.~\cite{qcl}. This was used in Ref.~\cite{qcl} to learn the Hamiltonian responsible for constraining the search space, while here we learn the Hamiltonian that reproduces the given time dynamics. This is shown in \autoref{fig:full_ansatz}. We start by first choosing the number of qubits/discretization required for each subspace in our multi-dimensional model, which dictates the number of qubits ($n_i$) required for representing each time series. We also choose the appropriate dimensions for the environment space - $n_e$. We then encode the basis states ($x_i$s)\footnote{$x_i$ are the terms that enter the cost function as the $i^{th}$ index in $a^{lh}_{ijkd}$ in main paper.} we start with in their respective subspace using $U_{x_i}$, by using a basis embedding circuit. We then Apply the ansatz  $V_{lk}(\vect{\alpha}, t)$, which is given by

\begin{equation*}
V_{lh}(\vect{\alpha}, t) := W_l(\theta)D_h(\gamma, t)W_l(\theta)^{\dagger} = e^{-i\mathcal{H}(\theta,\gamma)t}, \label{eq:ans}
\end{equation*}

where $W_l(\theta)$ is a $l$-layered parameterized quantum circuit, $D_h(\gamma, t)$ is a $h$-local parameterized diagonal unitary (explained in following paragraphs) evolved for time $t$ and $\vect{\alpha} = \{\gamma, \theta\}$ is the parameter set (as shown in \autoref{fig:full_ansatz}).

$V_{lh}(\vect{\alpha}, t)$ could be better understood by following the application order of unitaries. Once the basis embedding is applied, we get a state given by $\ket{\psi_1}=\ket{x_1}\otimes\ket{x_2}\otimes\ldots\ket{0}^{\otimes{n_e}}$. Further application of the unitary $W^{\dagger}$, leads to the state $\ket{\psi_2}=W^{\dagger}\ket{\psi_1}$, which is equivalent to a parametrized rotation to a new basis set. We hope here that the perfect learning parameters leads to a choice of $W$, which would rotate it to the diagonal basis of the unknown Hamiltonian we are trying to learn. We then evolve the state $\ket{\psi_2}$ for a given time $t$ with the eigenvalues of the Hamiltonian given by $\vect{\gamma}$ - $\ket{\psi_3}=D(\vect{\gamma},t)\ket{\psi_2}$. Since our hope is to learn the ideal eigenbasis of this Hamiltonian, we gain the ability to time evolve the system by simply multiplying the diagonal eigenvalue matrix with the corresponding time step. Finally, to measure the final state in the original basis, we revert back to it by applying $\ket{\psi_4}=W \ket{\psi_3}$. Thus the transformation from $\ket{\psi_1}\rightarrow\ket{\psi_4}$ involves evolving the initial state for an appropriate time with a parameterized hermitian Hamiltonian. Our goal is to find such a Hamiltonian, partial trace of which gives us the correct statistics to reach the various states in various sub-spaces we .

More concretely, for the numerical experiments, $W$ is implemented as shown in \autoref{fig:W_ansatz}. It contains $X,Y,Z$ single qubit parameterized rotation gates for each qubit followed by $CNOT$ gates that entangle every qubit with its $l^{th}$ successor at $l^{th}$ layer \textit{i.e.} CNOT gates between the qubits $j$ and $(j + i)\%n \quad j \in [1, n]$ where $i$ is the number of the layer. The diagonal unitary $D_h$ is prepared efficiently using a Walsh–Fourier series approximation as done in \cite{Welch2014}. It is given by

\begin{equation}
    D = e^{iG} = \prod\limits_{j=0}^{2^q-1} e^{i\gamma_j}\otimes_{m=1}^{n}(Z_m)^{j_m},
    \label{eq:diagonal_unitary}
\end{equation}

where $q = n$, $G$ and $Z_m$ are diagonal operators with Pauli operator $Z_m$ acting on $m$-th bit in bitstring $j$. Efficient quantum circuits for minimum depth approximations of $D$ is obtained by resampling the function on the diagonal of $G$ at sequences lower than a fixed threshold, with $q = m$, with $m <= n$. The resampled diagonal takes the same form as \autoref{eq:diagonal_unitary} but with q = k~\cite{Welch2014}.

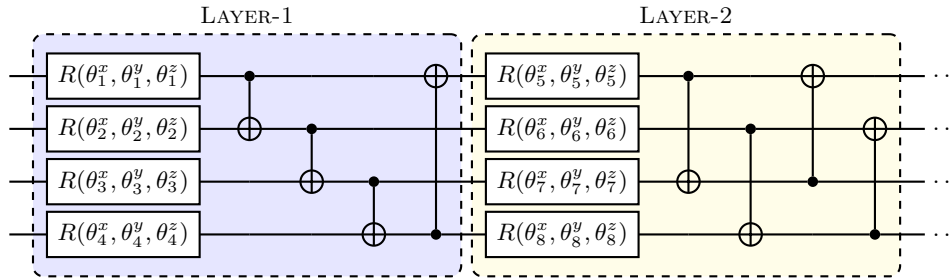
\begin{figure}[H]
\centering
\begin{quantikz}[row sep=0.1cm]
& \gate{R(\theta^x_1,\theta^y_1,\theta^z_1)}\gategroup[4,steps=5,style={dashed,rounded corners,fill=blue!10, inner xsep=2pt},background]{{\sc Layer-1}}        
& \ctrl{1} & \qw      & \qw      &  \targ{} & \gate{R(\theta^x_5,\theta^y_5,\theta^z_5)}\gategroup[4,steps=5,style={dashed,rounded corners,fill=yellow!10, inner xsep=2pt},background]{{\sc Layer-2}} 
                                                                                                                                        & \ctrl{2}  & \qw       & \targ{}   &  \qw & \ \ldots \qw  \\
& \gate{R(\theta^x_2,\theta^y_2,\theta^z_2)}  & \targ{}  & \ctrl{1} & \qw      & \qw      & \gate{R(\theta^x_6,\theta^y_6,\theta^z_6)} & \qw       & \ctrl{2}  & \qw       & \targ{}& \ \ldots \qw \\
& \gate{R(\theta^x_3,\theta^y_3,\theta^z_3)}        & \qw      & \targ{}  & \ctrl{1} & \qw      & \gate{R(\theta^x_7,\theta^y_7,\theta^z_7)} & \targ{}   & \qw       & \ctrl{-2} & \qw & \ \ldots \qw     \\
& \gate{R(\theta^x_{4},\theta^y_{4},\theta^z_{4})}  & \qw      & \qw      & \targ{}  & \ctrl{-3}& \gate{R(\theta^x_8,\theta^y_8,\theta^z_8)}  & \qw       & \targ{}   & \qw       & \ctrl{-2}& \ \ldots \qw       
\end{quantikz}
\caption{Ansatz for $W(\vect{\theta})$ used for numerical experiments, where in each layer, we have 3 single qubit parameterized rotation gates $X,Y,Z$ along with $CNOT$ entangling the qubits at distance $l$ apart for $l^{th}$ layer.}\label{fig:W_ansatz}
\end{figure}

The PQC is a junction between the quantum and classical computations, where the probability distribution obtained from repeated measurements of the quantum variational circuit are fed into the classically evaluated cost function. The parameter values are refined by the optimizer and fed back into the variational circuit until a desired level of accuracy is reached. Throughout the paper, as mentioned in main text, we have used the COBYLA~\cite{powell1994direct} optimization scheme for learning the similarity. Because of the costly function evaluations required to compute the terms in loss function, we apply similar heuristics introduced in Ref.~\cite{https://doi.org/10.48550/arxiv.2201.02310}, where the concept of stochastic batching was used to reduce the number of evaluations. We pick $N_{\text{batch}}$ random points from the set of points made up of time step and the initial basis set from the training set for $i^{th}$ iteration. We use this batch cost function in the COBYLA optimization routine until convergence is achieved. We note that this form of ``stochastic'' COBYLA is similar to stochastic gradient descent techniques~\cite{bottou2010large}, but has no theoretical backing yet in the literature.

\subsection{Trainablity of ansatz}
 \begin{figure}[htbp]   
    \centering
    \includegraphics[width=0.6\linewidth]{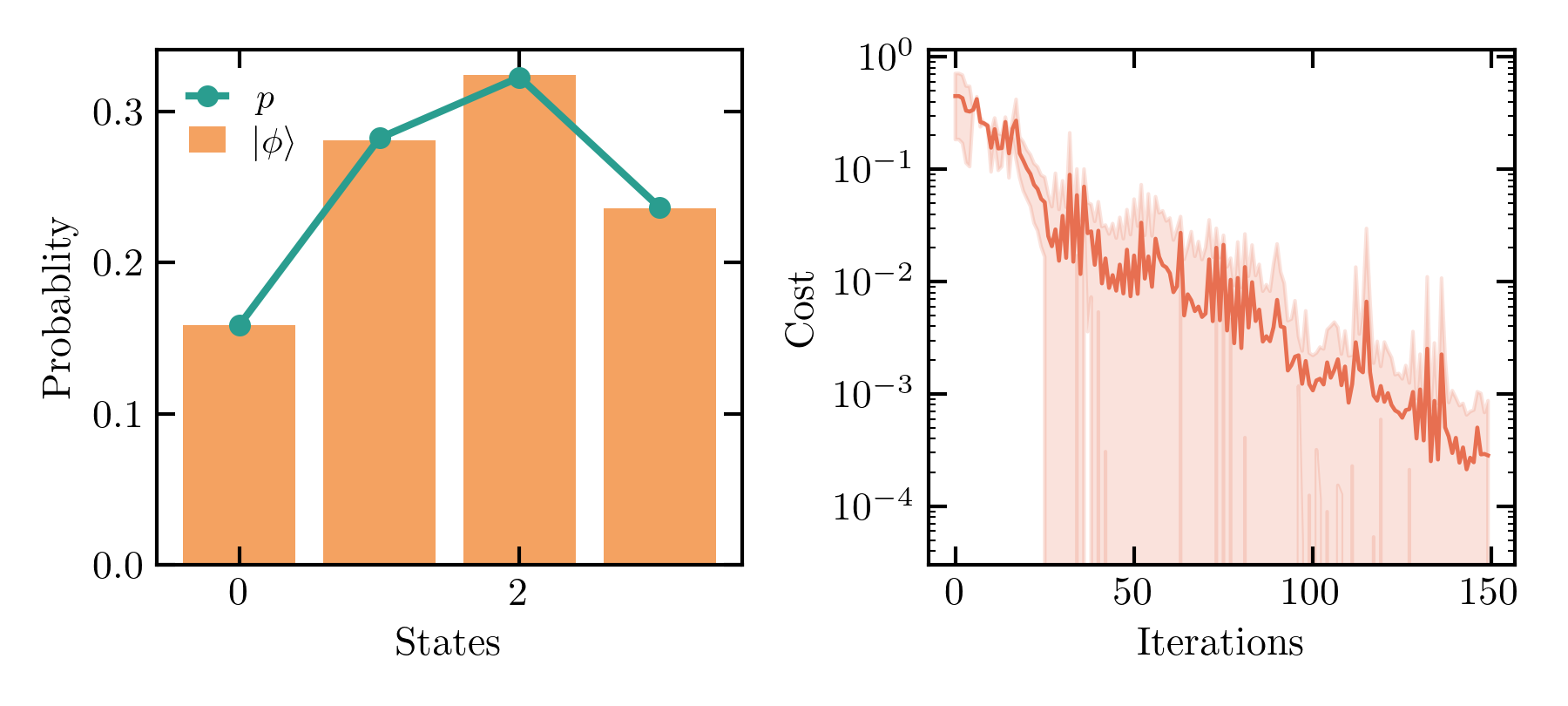}
    \caption{Learnability of a general probability distribution by the given ansatz. \textit{(left)} Example of a learned distribution, \textit{(right)} Average cost of learning 50 random distributions}
    \label{fig:learnablity}
\end{figure}
Before applying it on the problem of interest, one cannot easily guarantee that a given ansatz can learn a general distribution. It was shown in Ref.~\cite{qcl} that ansatz of the form used here can indeed learn distributions that gives either equal superposition of a set of basis set or a single basis set. Here we extend that and numerically show that the anstaz is capable of learning general probability distributions (albiet with error due to expressivity). This is important as our goal is indeed to learn a general distribution. To this end, we generate $10^3$ random $n=2$ qubit probability distribution, which we try to learn with an initial starting state $U_{x_1}|_{x_1=1}\ket{00}=\ket{11}$. That is, we try to learn a Hamiltonian that will evolve the state $\ket{11}$ in time $t$ (again randomly sampled) to a state which gives us a probablity distribution $p_i$ for $10^3$ different $p_i$s. We here use fully $2-$local diagonal ansatz (fully connected) and $2$ layers for $W$ ansatz. In \autoref{fig:learnablity}(right), we show one such $p_i$ (green line), along with the learned distribution (orange histogram) evolved using the trained Hamiltonian. We also show in \autoref{fig:learnablity}(left) the mean and standard deviation of cost at each iteration for all $10^3$ runs.
 
After the training procedure, the model can be used in various ways to generate the time series data. To generate the time series data at a particular time step $k$ for this study, we run the model with the learned transition matrix $T(k)$ and use the state from the single-shot measurement without any post processing. So, to generate a series of length m, we will need to run the circuit with $T(k)$ with $k=1,2,..$ respectively and use the states as the generated time series data from the measurements of each of these runs. However, one can use various post-processing techniques. One such technique is to use multiple shots, say m, per time step and then using the probability distribution obtained from these measurements. This obviously increases the quantum complexity from 1 to m shots per time step. One can also use the single-shot data in conjunction with maximum likelihood estimation to generate a probability distribution of states instead of a single state.

\section{Hamiltonian learning, Non-Markovianity and complexity}\label{ssec:nonmarkov}

\begin{figure}[!htb]
   \centering
   \includegraphics[width=0.6\linewidth]{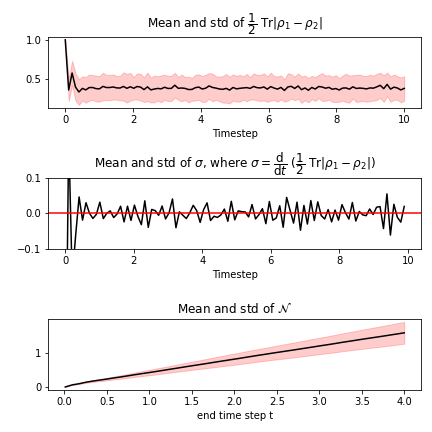}
    \caption{Characterization and quantification of non-Markovianity using $D$ (\textit{top}), $\sigma$ (\textit{middle}) and $\mathcal{N}$ (\textit{bottom}) as defined in \autoref{eq:D}, \autoref{eq:sigma}, \autoref{eq:N} respectively} 
    
    \label{fig:measure}
\end{figure}

As noted by Feynman~\cite{feynman1982simulating}, simulating quantum Hamiltonians requires exponentially large resources as the system scales up. In reality, one usually resolves to perturbation theory which can only be applied to problems that can be split into an exactly solvable part and a perturbative part, provided that a relevant small parameter (e.g., weak coupling strength with respect to other energy scales) can be established. It is well known that even the static limit of strongly interacting quantum systems are often inaccessible analytically. Although numerical techniques for such systems like Quantum Monte Carlo (QMC)~\cite{landau2021guide} have been introduced, it is often plagued with stability problems. Things turn quickly out of hands when one starts exploring time dynamics of such strongly correlated quantum systems~\cite{chen2017inchworm,prosen2007efficiency}. Although there exists a class of exactly solvable models of open quantum dynamics \cite{filippov2017divisibility}, without the Markov approximation, such problems is typically impossible to solve directly because of the exponentially large dimension of the environment’s Hilbert space~\cite{holevo2019quantum}. Thus, when one starts dealing with non-Markovian dynamics, things become even more harder requiring very specific assumptions to simulate time dynamics such systems. In following paragraphs, we show that the stable Hamiltonian that one learns for given classical series indeed resides in this non-Markovian regime. This is not a rigorous classification of complexity for simulating the given classical process, instead we just analyze the complexity of simulating the learnt equivalent quantum process. Thus to conclude, we note that the non-Markovian dynamics are hard to model classically and therein lies the potential quantum advantage.

 We use the metric proposed in Ref.~\cite{Breuer2009} as a measure of Markovianity. Briefly, a Markovian process tends to diminish the distinguishability between two states that start out maximally distinguishable, while a non-Markovian process retains some degree of distinguishability over time. Ref.~\cite{Breuer2009} uses this notion of distinguishability as a proxy for the measure. To this end, one uses the trace distance as a distance measure between two quantum states, given by

\begin{equation}
    D = \dfrac{1}{2}~Tr\left|\rho_1 - \rho_2 \right|,\label{eq:D}
\end{equation}

where $\rho_1$ and $\rho_2$ are two quantum states and $|A| = \sqrt{A^\dagger A}$. Trace distance tends to be a very natural metric in density matrix space with the property $0\leq D \leq 1$. Moreover, one can show that no trace preserving quantum operation can ever increase the trace-distance of two states~\cite{ruskai1994beyond}, which is the exact property we need in our measure.

As the system evolves given by the rules of the quantum process, one then looks at the rate of change of this distance w.r.t. time for the states $\rho_1,\rho_2$, evolved with a given quantum process $\Phi(t)$, which is give by 
\begin{equation}
    \sigma (t, \rho_{1,2}(0)) = \diff{}{t} D(\rho_1(t),\rho_2(t)).\label{eq:sigma}
\end{equation}

This quantity depends not only on time $t$ but also on the initial states $\rho_{1/2}(0))$ as $\rho_{1/2}(t)=\Phi(t, 0) \rho_{1/2}(0)$. It was shown in Ref.~\cite{Breuer2009} that $\sigma(t)\leq 0$  for all quantum
Markovian processes. Thus, when a given process attains $\sigma(t)> 0$ for some time $t$, the process is said to be non-Markovian during such time, with the degree of non-Markovianity given by $\sigma(t)$. Correspondingly to characterize the process, one can look at the following cumulative measures over time

\begin{equation}
   \mathcal{N}(\Phi,t) = \max_{\rho_{1,2}(0)} \int_{0}^t dt~\sigma(t, \rho_{1,2}(0))|_{\sigma > 0}.\label{eq:N}
\end{equation}
$\mathcal{N}(\Phi,t)$ measures the cumulative degree of non-Markovianity of the given process up until time $t$. Although we use the above measure, there are equivalent and alternative measures~\cite{Rivas2014}, with some researchers employing machine learning techniques to quantify such effects~\cite{Fanchini2021}.

We start by using the learned parameter for one dimensional time series. We choose various starting states $\rho_{0/1}$ for initialization (seen as mean/spread in figure). We then compute the various quantities we are interested in. In the top panel of \autoref{fig:measure}, we plot the metric $D$ that measures the difference at each time-step between the time series evolved by the learned quantum process. Although the states start with being maximally distinguishable, they quickly lose the distinctness. But looking closer at the rate of change of such measure (\autoref{fig:measure} \textit{middle}), we see that there exists time steps where the states move apart from each other (indicated by $\sigma >0$). Interestingly, when one looks at the cumulative measure, we see that non-Markovianity is injected more and more (in fact linearly) as the system is evolved. As a result of this increasing (non-saturating)  non-Markovianity, the system gets harder and harder to be simulated classically.

\bibliography{refs}

\end{document}